\documentclass[a4paper,twocolumn,10pt]{article}

\usepackage{authblk}

\usepackage{amsfonts}
\usepackage{amsmath}
\usepackage{amssymb}
\usepackage{amsthm}
\usepackage{mathtools}
\usepackage{bm}
\usepackage{graphicx}
\usepackage{color}
\usepackage{subfig}
\usepackage{url}
\usepackage[ruled, linesnumbered, noend]{algorithm2e}
\usepackage{stfloats}
\usepackage{tikz}
\usetikzlibrary{patterns,decorations.pathmorphing}
\usepackage{caption}
\usepackage{multirow}
\usepackage{blindtext}
\usepackage{adjustbox}

\newtheorem{definition}{Definition}
\newtheorem{lemma}{Lemma}
\newtheorem{theorem}{Theorem}
\newtheorem{corollary}{Corollary}
\newtheorem{proposition}{Proposition}
\newtheorem{claim}{Claim}

\definecolor{myGrey}{RGB}{160,160,160}

\begin{document}

\title{Protect Edge Privacy in Path Publishing with Differential Privacy}
\author[1]{Zhigang Lu}
\author[1,2]{Hong Shen}
\affil[1]{School of Computer Science, The University of Adelaide, Adelaide, Australia\\\{zhigang.lu,hong.shen\}@adelaide.edu.au}
\affil[2]{School of Data and Computer Science, Sun Yat-sen University, Guangzhou, China}
\date{}

\maketitle

\begin{abstract}
Paths in a given network (map) are a generalised form of time-serial chains in many real world applications, such as trajectories and Internet flows. Differentially private trajectory publishing concerns publishing path information that is usable to the genuine users yet secure against adversaries to reconstruct the path with maximum background knowledge. The exiting studies all assume this knowledge to be all but one vertex on the path, i.e., the adversaries have the knowledge of the network topology and the path but one missing vertex (and its two connected edges) on the path. To prevent the adversaries recovering the missing information, they publish a perturbed path where each vertex is sampled from a pre-defined set with differential privacy (DP) to replace the corresponding vertex in the original path. In this paper, we relax this assumption to be all but one edge on the path, and hence consider the scenario of more powerful adversaries with the maximum background knowledge of the entire network topology and the path (including all the vertices) except one (arbitrary) missing edge. Under such assumption, the perturbed path produced by the existing work is vulnerable, because the adversary can reconstruct the missing edge from the existence of an edge in the perturbed path whose two ends are close to the two ends of the missing edge. To address this vulnerability and effectively protect edge privacy, instead of publishing a perturbed path, we propose a novel scheme of graph-based path publishing to protect the original path by embedding the path in a graph that contains fake edges and replicated vertices applying the differential privacy technique, such that only the legitimate users who have the full knowledge of the network topology are able to recover the exact vertices and edges of the original path with high probability. We theoretically analyse the performance of our algorithm in terms of output quality — differential privacy and utility, and execution efficiency. We also conduct extensive experimental evaluations on a high-performance cluster system to validate our analytical results.
\end{abstract}

\section{Introduction}
\label{sec:6:introduction}
Paths are pervasive in our daily life, such as, geometric trajectory, internet traffic flow, supply chain, and intelligence exchange network~\cite{FouldsL2012,GuoL2016}. Publishing a path helps the owners of the path discover more knowledge about the path and obtain the information about their roles on the path. However, attackers may infer the privacy of a path from its published information by applying their auxiliary information~\cite{ChenS2016,MuX2019}. Therefore, it is crucial to consider privacy disclosure risks when publishing a path.

In this paper, we consider path publishing against adversaries with the maximum knowledge of the network topology but an arbitrary edge to infer the path (edge).
That is, given a network $N = (V, E)$, we publish a  path $P = (V, E_{P} \subset E)$ protecting an arbitrary edge $e \in E_{P}$ against adversaries who have $N^{\prime} = (V, E\setminus\{e\})$ containing $P^{\prime} = (V, E_{P}\setminus\{e\})$ containing $P^{\prime} = (V, E_{P}\setminus\{e\})$. This problem can be found in several real-life scenarios:

\textbf{Supply Chain}. In this scenario, $N$ is a service system of an organisation; $V$ is the set of service suppliers; $E$ is the set of resource transfer links; $P$ is a supply chain; path publishing is the way to ensure every supplier knows how resources are transferred; "adversaries" are competing organisations which want to know how the target organisation allocates the resources.

\textbf{Intelligence Exchange}. In this scenario, $N$ is a spy network; $V$ is the set of spies; $E$ is the set of spy communication channels; $P$ is the message exchange path; path publishing is the way to let all spies know whom should be contacted with; "adversaries" are counter-intelligence units who want to know how a targeted spy exchanges information with others.

Presently, privacy-preserving path publishing against the maximum background knowledge is well studied in trajectory publishing with differential privacy (DP)~\cite{DWORK2006B} because DP is a promising privacy notion due to its privacy-preserving guarantee against the attackers with arbitrary auxiliary information. Particularly, there are two tracks of research in differentially private trajectory publishing: set of trajectories publishing~\cite{ChenR2012,HuaJ2015,NieY2016,LiC2017,CaoY2018,AlhussaeniK2018,GursoyM2018,MaX2018} and (single) trajectory publishing~\cite{JiangK2013,XiaoY2015,HeX2015,WangS2016,WangS2017,CaoY2019}. In this paper we address the problem in the latter track.

In a nutshell, existing studies in the differentially private trajectory publishing share a similar idea in two-fold. First of all, they take the same assumption on the adversaries' background knowledge. That is, the adversaries have full information about the map (network) except some locations (and all edges incident to them) in the target trajectory. In the worst case, only one location is missing at the attacker's side. Second, they use exponential mechanism of DP (ExpDP) to implement differential privacy. Specifically, the published trajectory contains fake locations only, which are sampled via ExpDP from a predefined area, to replace the real ones in the target trajectory. The differences over these methods mainly come from the quality measurement for sampling the fake locations due to different objectives, such as the temporal-spatio relationships between locations~\cite{XiaoY2015,CaoY2019}, the utility of the fake locations, including movement directions in a trajectory~\cite{JiangK2013} and similarity between the real and the fake locations~\cite{HeX2015,WangS2016}. 

The state-of-the-art results in single trajectory publishing achieve a satisfying trade-off between utility and privacy by DP against adversaries with missing vertices; however, we discover two weaknesses in the existing work~\cite{JiangK2013,HeX2015,XiaoY2015,WangS2016,CaoY2019}. First, the assumption of missing vertices does not cover the worst scenario of adversary' maximum background knowledge, i.e., an adversary has the full knowledge of the map including all vertices and edges except one edge on the trajectory, which arises in many real-life applications requiring protection of an sensitive connection between two entities such as supply chain, financial trading and intelligence exchange.   In this case, the published (perturbed, privacy-preserved) path by the existing work is vulnerable to preserve the existence of an arbitrary edge of the original path, because the adversary can reconstruct the missing edge by the existence of an edge in the perturbed path whose two ends are close to the two ends of the missing edge. Figure~\ref{fig:6:attack} depicts a running attack against existing work. Second, the existing work does not in general publish real vertices on the path, so it is hard for the genuine users (who have full information about the network) to reconstruct the original path from the published one to confirm the exact links between the real vertices.
\begin{figure}[!th]
\centering
\begin{tikzpicture}[xscale=0.5, yscale=0.5]
\draw [pattern=north east lines] (-4,1) circle [radius=0.22];
\node [above] at (-4,1+0.1) {$a^{\prime}$};
\draw [->] (-4+0.2,1-0.1) to (-2-0.2,0+0.1);
\draw [pattern=north east lines] (-2,0) circle [radius=0.22];
\node [below] at (-2,-0.1) {$b^{\prime}$};
\draw [->] (-2+0.22,0) to (0-0.22,0);
\draw [pattern=north east lines] (0,0) circle [radius=0.22];
\node [above] at (0,0.1) {$c^{\prime}$};
\draw [->] (0+0.16,0+0.16) to (2-0.16,2-0.16);
\draw [pattern=north east lines] (2,2) circle [radius=0.22];
\node [above] at (2,2+0.1) {$d^{\prime}$};

\draw (-5,0) to (-2,1);
\draw (1,-1) to (1,2);
\draw [fill=white] (-5,0) circle [radius=0.22];
\node [below] at (-5,-0.1) {$a$};
\draw [fill=white] (-2,1) circle [radius=0.22];
\node [above] at (-2,1+0.1) {$b$};
\draw [fill=white] (1,-1) circle [radius=0.22];
\node [below] at (1,-1-0.1) {$c$};
\draw [fill=white] (1,2) circle [radius=0.22];
\node [above] at (1,2+0.1) {$d$};
\draw [rotate around={135:(-4.5,0.5)},dashed] (-4.5,0.5) ellipse (28pt and 54pt);
\draw [dashed] (-2,0.5) ellipse (23pt and 50pt);
\draw [rotate around={45:(0.5,-0.5)},dashed] (0.5,-0.5) ellipse (28pt and 54pt);
\draw [rotate around={90:(1.5,2.2)},dashed] (1.5,2) ellipse (28pt and 45pt);
\draw [pattern=north east lines] (3,0+0.2) circle [radius=0.22];
\draw [pattern=north east lines] (4,0+0.2) circle [radius=0.22];
\draw [->] (3+0.22,0+0.2) to (4-0.22,0+0.2);
\node [right] at (4.1,0+0.2) {: published path};
\draw (3,-0.7) to (4,-0.7);
\draw [fill=white] (3,-1+0.3) circle [radius=0.22];
\draw [fill=white] (4,-1+0.3) circle [radius=0.22];
\node [right] at (4.1,-1+0.3) {: attackers' partial path};
\draw [dashed] (3.5,-2+0.4) ellipse (20pt and 10pt);
\node [right] at (4.1,-2+0.4) {: guessing sampling set};
\node [right] at (4.7,-3+0.67) {by vertices closeness};
\draw [thick] (-1.2+0.2,-2.1) to (-1.2+0.2,-2.5);
\draw [thick] (-0.8+0.2,-2.1) to (-0.8+0.2,-2.5);
\draw [thick] (-1.55+0.2,-2.35) to (-1+0.2,-2.7);
\draw [thick] (-0.45+0.2,-2.35) to (-1+0.2,-2.7);
\node [below] at (0,-3.1) {\textbf{Missing Edge Discovered:}};
\node [below] at (0,-3.8) {Edge $(b,c)$ must be in the path};
\end{tikzpicture}
\caption{A Running Attack against the Existing Algorithms of Differentially Private Trajectory Publishing}
\label{fig:6:attack}
\end{figure}

To fill the research gap, this paper studies how to protect an arbitrary edge in path publishing. In summary, our main contributions are:
\begin{itemize}
  \item We address the problem of differentially private path publishing in a generalised form against inference attacks from adversaries with more background knowledge than the existing work. Different from the existing work where the adversary knows the network topology and $n - 1$ out of $n$ vertices of the path, we preserve privacy against adversaries who have the full knowledge of the network and all vertices on the path except one missing edge of the path (network);
  \item Our proposed method guarantees a high data utility and security (differential privacy). That is, the genuine users (who have the full knowledge of the network) are able to reconstruct the path but the adversaries are unable to infer the missing edge and hence reconstruct the path. In particular, we publish a graph-based path where fake edges and duplicate real vertices are created to embed the real path into a graph for publishing, such that only the genuine users are able to reconstruct the exact vertices and edges of the original path, and determine the visiting order of the vertices with high probability;
  \item We evaluate the key properties of the proposed method both theoretically and experimentally. Specifically, we mathematically analyse differential privacy guarantee and data utility of our scheme. We conduct experiments on a cluster system and extensively evaluate our scheme with synthetic datasets for various network sizes and densities.
\end{itemize}
To the best of our knowledge, this is the first work that considers edge privacy, i.e., the existence of an edge between two vertices as privacy, in differentially private path publishing. 

\textbf{Organisation.} The rest of this paper is organised as follows: In Section~\ref{sec:6:rw}, we discuss existing work on differentially private single trajectory publishing for both the advantages and disadvantages. Next, in Section~\ref{sec:6:preliminaries}, we give a brief introduction of the preliminaries of this paper, including differential privacy, path model and the attackers' background knowledge. Then we propose our differentially private single path publishing in Section~\ref{sec:6:main}. Mathematical analysis are also showed in this section. Afterwards, in Section~\ref{sec:6:exp}, we provide the experimental evaluation to the performance of our algorithms on a synthetic dataset. Finally, we conclude this paper in Section~\ref{sec:6:conclusion}.

\section{Related Work}
\label{sec:6:rw}
In this section, we briefly summarise the state-of-the-art work of differentially private single trajectory publishing in both advantages and disadvantages.

As far as we know, \cite{JiangK2013} is the first differentially private algorithm for single trajectory publishing. In \cite{JiangK2013}, Jiang et. al. proposed a Sampling Distance and Direction (SDD) mechanism, which publishes a partially perturbed trajectory. In the published trajectory of SSD, the first and the last locations come from the original trajectory directly. While, for each of the rest locations, say fake location $i$, were determined by a noisy distance and a noisy direction based on the real location $i-1$. Exponential mechanism of differential privacy (ExpDP) was used for sampling the noisy distance and direction. 

Xiao et. al.~\cite{XiaoY2015,XiaoY2017} observed that the temporal-spatio relationship may release privacy even though a perturbed trajectory was published. So a $\delta$-Location set for each location in the trajectory was prepared in \cite{XiaoY2015}. All the locations in the $\delta$-Location set satisfy the requirement for the temporal-spacial relationship. Then, the most possible location to replace the real one in the original trajectory will be sampled via ExpDP from the $\delta$-Location set. In addition, a Planar Isotropic Mechanism (PIM) was proposed to achieve optimal lower bound of DP for the $\delta$-Location set. In addition, Cao et. al.~\cite{CaoY2019} proposed a Private Spatio-Temporal Event (PriSTE) framework to further study the spatio-temporal privacy of \cite{XiaoY2015} for choosing better fake locations in a given trajectory.

Similar to \cite{JiangK2013}, He et. al.~\cite{HeX2015,HeX2016} proposed a Differentially Private Trajectories (DPT) system to select the most similar and possible location in a given area to replace a real location in the trajectory. The DPT system was achieved by three key components: hierarchical referencing, model selection, and direction weighted sampling. Specifically, hierarchical referencing firstly generates a bunch of possible trajectory candidates. Afterwards, model selection selects a trajectory with ExpDP. Finally, direction weighted sampling adjusts the directions of each location in the fake trajectory to meet the directions in the original trajectory. 

Wang et. al.~\cite{WangS2016} proposed a Private Trajectory Calibration System (PTCS), which applies a clustering-based method to find the fake locations to replace the real ones for the published trajectory. In PTCS, as a initialisation step, frequently visited locations of the user, whose trajectory will be published then, are clustered firstly. Then an anchor location in each cluster is picked as the representative of the cluster. For a give trajectory of the user, each location in this trajectory is replaced by its nearest anchor location to form the published trajectory.

In summary, by considering the features of a trajectory, such as the movement directions, temporal-spacial relationships, the state-of-the-art achieve satisfying trade-off between privacy guarantee and data utility for future study on the published trajectory. However, because of the published fake vertices, even the trusted path participates cannot recover the original path. Furthermore, due to their assumption on the attackers' background knowledge, these works are vulnerable under our assumption of the attackers, i.e., the attackers know the vertices in a path but miss some edges. The attackers with such background knowledge can definitely infer the privacy from the published path by existing work in differentially private single trajectory publishing.

\section{Preliminaries}
\label{sec:6:preliminaries}
In this section, we briefly introduce the notion of privacy used in this paper, i.e., differential privacy, the models for map and path, and our assumption on the adversaries' auxiliary information.

\subsection{Differential Privacy.} Differential Privacy (DP) is a famous notion of privacy in recent privacy-preserving research field~\cite{GUPTA2010,LuZ2017,WANG2018,LuZ2019}, which was firstly introduced and defined by Dwork et at.~\cite{DWORK2006A}. Informally, DP is a scheme that minimises the sensitivity of output for a given statistical operation on two neighbouring (differentiated in one arbitrary record to protect) datasets. That is, DP guarantees the presence or absence of any item in a dataset will be concealed to the adversaries with maximum auxiliary information.

In DP, the basic setting is a pair of neighbouring datasets $X$ and $X^{\prime}$, where $X^{\prime}$ contains the information of all the items except one item in a dataset $X$. A formal definition of Differential Privacy is shown as follow:
\begin{definition}[$\epsilon$-DP~\cite{DWORK2006A}]
\label{def:dp}
A randomised mechanism $\mathcal{A}$ is $\epsilon$-differentially private if for all neighbouring datasets $X$ and $X^{\prime}$ in domain $\mathcal{D}$, and for an arbitrary answer $s \in Range(\mathcal{A})$, $\mathcal{A}$ satisfies: 
$$\Pr[\mathcal{A}(X) = s]\leq \exp(\epsilon)\cdot\Pr[\mathcal{A}(X^{\prime}) = s]$$
where $\epsilon$ is the privacy budget.
\end{definition}

Two parameters are essential to DP: the privacy budget $\epsilon$ and the function sensitivity (global) $\Delta f$, i.e. $\Delta f(X)$, where $f$ is the query function to the dataset $X$. The privacy budget $\epsilon$ is set by the trusted dataset curator (who has full access to the dataset $X$). Theoretically, a smaller $\epsilon$ denotes a higher privacy guarantee because the privacy budget $\epsilon$ reflects the magnitude of the difference between two neighbouring datasets. $\Delta f$ is calculated by the following equation,
\begin{equation*}
\label{eq:6:ls}
\Delta f(X) = \max_{\forall X \in \mathcal{D}}|f(X) - f(X^{\prime})|,
\end{equation*}

In this paper, we mainly use the Exponential mechanism of DP (ExpDP)~\cite{MCSHERRY2007}. In general, the ExpDP is suitable for preserving privacy in the non-numeric computation. The ExpDP introduces a quality function $q(X, x)$ which reflects how appealing the pair $(X, x)$ is, where $X$ denotes a dataset and $x$ is the random respond to a query function on the dataset $X$. When applying the ExpDP, we can simply treat it as a weighted sampling, where the quality function assigns weights to the sample space. The formal definition of ExpDP is shown below:
\begin{definition}[Exponential Mechanism~\cite{MCSHERRY2007}]
\label{def:expMech}
Given a quality function of a dataset $X$, $q(X,x)$, which reflects the quality of a query respond $x$. The exponential mechanism $\mathcal{A}$ provides $\epsilon$-differential privacy, if $\mathcal{A}(X) = \left\{ \Pr[x] \propto \exp\left({\frac{\epsilon\cdot q(X,x)}{2\Delta q}}\right) \right\}$, where $\Delta{q}$ is the sensitivity of the quality function $q(X, x)$, $\epsilon$ is the privacy budget.
\end{definition}

\subsection{Network and Path Models}
Without loss of generality, a network (e.g., roadmap, internet node map, spy network, etc.) is a bidirected/undirected graph $N = (V, E)$, where $V$ is the set of vertices, $E$ is the set of edges. In this paper, the network is a connected graph to contain the path $P$.

The path $P = (V, E_{P} \subset E)$ is a subset of the network $N$. We define the path as a sequence of vertices in visiting order, i.e., a path $P = \{v_{1}, v_{2}, \dots, v_{n}\}$, where $v_{i} \in V$ is the $i$th vertex in $P$, $e_{i} = (v_{i}, v_{i+1}) \in E_{P}$ is the $i$th edge in $P$. Since the path $P$ may contain rings, we may have $\exists i \neq j$, $v_{i} (\in V) = v_{j} (\in V)$.

\begin{figure}[!th]
\centering
\begin{tikzpicture}[xscale=0.5, yscale=0.5]
\draw (0,0) to (3,0);
\draw [decorate,decoration={snake,post length=0.1mm}] (0,0) to (3,0);
\draw [dashed] (0,0) to [out=330,in=210] (3,0);
\draw (0,0) to [out=50,in=130] (9,0);
\draw (3,0) to (3,-3);
\draw [decorate,decoration={snake,post length=0.1mm}] (3,0) to (3,-3);
\draw [dashed] (3,0) to [out=210,in=150] (3,-3);
\draw (3,0) to (6,-3);
\draw (3,0) to (6,0);
\draw (3,0) to [out=50,in=130] (9,0);
\draw [myGrey] (3,-3) to (6,-3);
\draw [myGrey,decorate,decoration={snake,post length=0.1mm}] (3,-3) to (6,-3);
\draw (3,-3) to (6,0);
\draw (3,-3) to [out=290,in=300] (9,0);
\draw (6,-3) to (6,0);
\draw [decorate,decoration={snake,post length=0.1mm}] (6,-3) to (6,0);
\draw [dashed] (6,-3) to [out=30,in=330] (6,0);
\draw (6,-3) to (9,0);
\draw [dashed] (6,-3) to [out=250,in=240] (0,0);
\draw (6,0) to (9,0);
\draw [decorate,decoration={snake,post length=0.1mm}] (6,0) to (9,0);
\draw [dashed] (6,0) to [out=330,in=210] (9,0);
\draw [fill=white] (0,0) circle [radius=0.3];
\node [above] at (0,0.2) {$a$};
\draw [fill=white] (3,0) circle [radius=0.3];
\node [above] at (3,0.2) {$b$};
\draw [fill=white] (3,-3) circle [radius=0.3];
\node [left] at (2.9,-3) {$c$};
\draw [fill=white] (6,-3) circle [radius=0.3];
\node [right] at (6.1,-3) {$d$};
\draw [fill=white] (6,0) circle [radius=0.3];
\node [above] at (6,0.2) {$e$};
\draw [fill=white] (9,0) circle [radius=0.3];
\node [above] at (9,0.2) {$f$};
\node [right] at (9.5,1) {$E:$};
\draw (11,1) to (12.5,1);
\node [right] at (9.5,0) {$P: a \rightarrow f$,};
\draw [decorate,decoration={snake,post length=0.1mm}] (13,0) to (15,0);
\node [right] at (9.5,-1) {$E^{\prime}:E \setminus \{(c,d)\}$};
\node [right] at (9.5,-2) {$P^{\prime}:$ Maybe:};
\draw [dashed] (13.6,-2) to (15.3,-2);
\node [right] at (10.5,-3) {$f \rightarrow d \rightarrow a \rightarrow c$};
\end{tikzpicture}
\caption{An Example of Network, Path, and Adversaries' Background Knowledge}
\label{fig:6:model}
\end{figure}
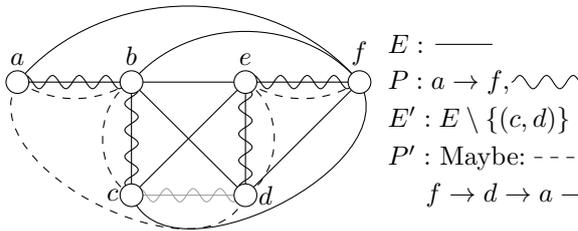

\subsection{Adversaries' Auxiliary Information}
\label{sec:6:background}
Different from the assumption of adversaries' auxiliary information in existing work of differentially private trajectory publishing, we assume the adversaries have correct information about the vertices but incorrect information about the edges of the path. We take the worst case of the adversaries' auxiliary information in this paper, so the adversaries do not know the existence of one edge in the original path $P$ (and the network $N$), say $e_{i}=(v_{i},v_{i+1})$. That is, the adversaries have $N^{\prime} = (V, E^{\prime})$, $P^{\prime} = (V, E_{P^{\prime}})$ that $e_{i} = E \setminus E^{\prime} = E_{P} \setminus E_{P^{\prime}}$. Figure~\ref{fig:6:model} illustrates the models of network, path and the adversaries' background knowledge.

\section{Proposed Algorithms}
\label{sec:6:main}
In this section, we will show our differentially private algorithm to preserve privacy in path publishing against the adversaries with background knowledge described in Section~\ref{sec:6:background}. In a nutshell, we propose a graph-based path publishing which uses the topology of the original network and the connections between vertices of the path to conceal the edge existence against the adversaries who do not have full knowledge about the edges in the given network.

\subsection{Idea Overview}
\label{sec:6:overview}
The idea of our graph-based differentially private path publishing was inspired by the permutation idea in \cite{GUPTA2010}. Especially, same as \cite{GUPTA2010}, the receivers of the published graph-based path are path participates, which can either be each vertex or each edge in the network. For example, in the intelligence exchange network, we aim to let all spies know whom should connect with, so the path participates are spies, i.e., the vertices in a network. In Internet traffic flow analysis, we want to monitor a given flow, so the path participates are the routers between servers, i.e., the edges in a network.
\begin{figure*}[!th]
\centering
\begin{tikzpicture}[xscale=0.5, yscale=0.5]
\node at (0,2) {$G$:};
\draw (0,0) to (3,-3);
\draw (0,0) to [out=50,in=130] (9,0);
\draw (3,0) to (6,-3);
\draw (3,0) to (6,0);
\draw (3,0) to [out=50,in=130] (9,0);
\draw (3,-3) to (6,0);
\draw (3,-3) to [out=315,in=270] (9,0);
\draw (6,-3) to (9,0);
\draw [fill=white] (0,0) circle [radius=0.3];
\node [above] at (0,0.2) {$a$};
\draw [fill=white] (3,0) circle [radius=0.3];
\node [above] at (3,0.2) {$b$};
\draw [fill=white] (3,-3) circle [radius=0.3];
\node [left] at (2.7,-3) {$c$};
\draw [fill=white] (6,-3) circle [radius=0.3];
\node [above] at (6,-2.8) {$d$};
\draw [fill=white] (6,0) circle [radius=0.3];
\node [above] at (6,0.2) {$e$};
\draw [fill=white] (9,0) circle [radius=0.3];
\node [above right] at (9,0) {$f$};

\node at (12,2) {$H$:};
\draw (12,0) to (15,0);
\draw [decorate,decoration={snake,post length=0.1mm}] (12,0) to (15,0);
\draw (12,0) to [out=270,in=235] (18,-3);
\draw (12,0) to [out=50,in=130] (18,0);
\draw (15,0) to (15,-3);
\draw [decorate,decoration={snake,post length=0.1mm}] (15,0) to (15,-3);
\draw (15,-3) to (18,-3);
\draw [decorate,decoration={snake,post length=0.1mm}] (15,-3) to (18,-3);
\draw (18,-3) to (18,0);
\draw [decorate,decoration={snake,post length=0.1mm}] (18,-3) to (18,0);
\draw (18,0) to (21,0);
\draw [decorate,decoration={snake,post length=0.1mm}] (18,0) to (21,0);
\draw [fill=white] (12,0) circle [radius=0.3];
\node [above] at (12,0.2) {$a$};
\draw [fill=white] (15,0) circle [radius=0.3];
\node [above] at (15,0.2) {$b$};
\draw [fill=white] (15,-3) circle [radius=0.3];
\node [above left] at (14.8,-3) {$c$};
\draw [fill=white] (18,-3) circle [radius=0.3];
\node [right] at (18.3,-3) {$d$};
\draw [fill=white] (18,0) circle [radius=0.3];
\node [above] at (18,0.2) {$e$};
\draw [fill=white] (21,0) circle [radius=0.3];
\node [above right] at (21,0) {$f$};

\node at (24,2) {$G_{P}$:};
\draw (24,-1) to (27,-3);
\draw (30,1) to (30,-1);
\draw (30,1) to (27,-1);
\draw (27,1) to (24,-1);
\draw (27,1) to (27,-1);
\draw (30,-1) to (27,-3);
\draw [fill=white] (24,-1) circle [radius=0.3];
\node [above] at (24,-0.8) {$a$};
\draw [fill=white] (30,1) circle [radius=0.3];
\node [above] at (30,1.2) {$b$};
\draw [fill=white] (27,1) circle [radius=0.3];
\node [above] at (27,1.2) {$c$};
\draw [fill=white] (30,-1) circle [radius=0.3];
\node [right] at (30.3,-1) {$d$};
\draw [fill=white] (27,-1) circle [radius=0.3];
\node [right] at (27.3,-1) {$e$};
\draw [fill=white] (27,-3) circle [radius=0.3];
\node [left] at (26.7,-3) {$f$};
\node at (15,-5) {Network $N$ contains an acyclic path $P$ (Figure~\ref{fig:6:model}) $\rightarrow$ $(G,H)$ $\rightarrow$ graph-based path $G_{P}$};
\end{tikzpicture}
\caption{An Example of Producing the Graph-based Path}
\label{fig:6:example}
\end{figure*}

In general, when building our graph-based path, we apply differential privacy to introduce uncertainties to hide the exact information about edges in the path with the following privacy-preserving rules for the published graph-based path.
\begin{enumerate}
  \item \label{itm:6:notsamebranch} For the edge $e = (u,v) \in E_{P}$, vertex $u$ and $v$ are not placed in a same branch in the graph-based path;
  \item \label{itm:6:samebranch} For the edge $e = (u,v) \in E \setminus E_{P}$, vertex $u$ and $v$ are placed in a same branch in the graph-based path;
  \item \label{itm:6:fakeedge} For the non-existent vertex connections $e = (u,v) \notin E$, vertex $u$ and $v$ are randomly (by DP) either in or not in a same branch in the graph-based path;
  \item \label{itm:6:fakevertex} For the vertices appear multiple times in the graph-based path, it does not implicitly indicate that those vertices are in a circle or not (because of DP);
  \item \label{itm:6:root} For a group of nodes without parent and a group of nodes without child in the graph-based path, we place the group, which contains the vertex closest to the first node of the path, as the root of the graph-based path.
\end{enumerate}

Particularly, because the path participates have full knowledge about the existence of edges, Rule~\ref{itm:6:notsamebranch} and Rule~\ref{itm:6:samebranch} guarantees the path participates can recover the exact vertices and edges of the path; Rule~\ref{itm:6:root} let the path participates determine the visiting order of the original path with high probability. In contrast, Rule~\ref{itm:6:fakeedge} and Rule~\ref{itm:6:fakevertex} introduce the uncertainties against the adversaries who do not have full knowledge about $P$, i.e., edge(s) missing in this paper.

\textbf{Path reconstruction}. The following steps help to recover the exact path from the published graph-based path. Firstly, each path participate confirms its status in the path $P$ according to Rule~\ref{itm:6:notsamebranch} and Rule~\ref{itm:6:samebranch}. Specifically, the vertices (as the path participates) check their neighbour vertices; the edges (as the path participates) check their two ends. Then the path participates exchange their status (possibly by secure multi-party computing~\cite{YAO1982}) to recognise the two ends of the path $P$. Finally, the path participates take one of the two ends, which is the closest node to the vertices as the roots of the graph-based path, as the starting point of the path. According to Rule~\ref{itm:6:root}, with high probability, each participate learns its index in the path, i.e., the visiting order of the path. 

\textbf{Producing the graph-based path}. To implement the privacy-preserving rules, in general, we split the give network $N$ to two sub-graphs: one (graph $G$) contains the edges not in $P$; one (graph $H$) contains the edges in $P$, but both of the two sub-graphs may contain some edges not in $N$. Then producing the graph-based path becomes the following problem. Given a graph $G = (V_{G}, E_{G})$, $U$ consists of all 2-element subsets of $V_{G}$, and $G$'s complement $H = (V_{H} = V_{G}, E_{H} = U \setminus E_{G})$, which is a Hamilton graph, e.g., $P$ is one Hamilton path in $H$. In this paper, we take $G$ as the input for producing $G_{P} = (V_{G}, E_{G_{P}})$, such that if $u$, $v$ are in a branch of $G_{P}$, then $e=(u, v) \in E_{G}$; otherwise, $e=(u, v) \in E_{H}$.

Note that, for the given network $N = (V, E)$, we have $(E \setminus E_{P}) \subset E_{G}$ because it is possible that $\exists e \in E_{G}$, that $e \notin E$ according to Rule~\ref{itm:6:fakeedge}; $V \subset V_{G}$ because it is possible $\exists v \in V_{G}$, that $v \notin V$ according to Rule~\ref{itm:6:fakevertex}. Figure~\ref{fig:6:example} depicts an example of producing $G_{P}$ from $G$ and $H$. Additionally, we should have an acyclic path in $H$; otherwise even the genuine path participates cannot recover the exact path. For example, consider using the original network in Figure~\ref{fig:6:prepmap} to produce a $G_{P}$, it is easy to conclude that  such a $G_{P}$ cannot recognise whether $e$ or $c$ is the succeeder of $b$ when we visit $b$ the first time.

Furthermore, $G$ and $G_{P}$ are equivalent for a graph-based path publishing. $G$ hides the path information, and potentially indicates the visiting order with Rule~\ref{itm:6:fakevertex}; however it causes high computational cost for recovering the real path. Corollary~\ref{cor:6:recover} proves that it takes $\mathcal{O}(|V| + |E|)$ for each path participate to recover its status in the path from $G$; in contrast, $\mathcal{O}(|V|)$ from $G_{P}$. Especially, in the worst case, $\mathcal{O}(|V| + |E|)$ is equivalent to $\mathcal{O}(|V|^{2})$, i.e., $\mathcal{O}(|V|) \ll \mathcal{O}(|V| + |E|)$. 

In addition, Table~\ref{tab:6:terms} lists the notations used in this paper.
\begin{table*}[!th]
\caption{The summary of notations.}
\label{tab:6:terms}
\centering
\begin{tabular}{c|l}
\hline
Notation & Description\\
\hline
\hline
$\cdot^{\prime}$ & The corresponding notation of adversaries' background knowledge\\
\hline
$\hat{\cdot}$ & The processed $\cdot$\\
\hline
$E$; $E_{G}$; $E_{P}$ & The set of edges in $N$; in $G$; in $P$\\
\hline
$G$ & The input graph to produce $G_{P}$\\
\hline
$G_{P}$ & The published graph-based path of this paper\\
\hline
$H$ & The complement graph of G\\
\hline
$N$ & The network contains the path $P$\\
\hline
$P$ & The original path for publishing\\
\hline
$V$; $V_{G}$ & The set of vertices in $N$ and $P$; in $G$\\
\hline
\end{tabular}
\end{table*}

\subsection{Map Pre-processing}
\label{sec:6:prep}
In this section, we pre-process the given network to meet our privacy-preserving Rule~\ref{itm:6:fakeedge}, Rule~\ref{itm:6:fakevertex} and the additionally acyclic requirement for graph $H$ by injecting duplicate vertices and fake edges with the exponential mechanism of DP (ExpDP).

\subsubsection{Differentially Private Vertices Pre-processing}
To convert a network to fit Rule~\ref{itm:6:fakevertex} and the acyclic requirement, in this section, we borrow the inheritance concept in objective-oriented programming (OOP) to create base vertices and sub-vertex(s) in a network. Such an idea comes from a feature of a cyclic path, that is, when there is a cyclic path in the network, at least one vertex is repeatedly visited.

\begin{figure*}[!ht]
\centering
\begin{tikzpicture}[xscale=0.5, yscale=0.5]
\node at (0,2) {The Give Network:};
\draw (0,0) to (3,0);
\draw [decorate,decoration={snake,post length=0.1mm}] (0,0) to (3,0);
\draw (0,0) to (3,-3);
\draw (3,0) to (3,-3);
\draw [decorate,decoration={snake,post length=0.1mm}] (3,0) to (3,-3);
\draw (3,0) to (6,0);
\draw [decorate,decoration={snake,post length=0.1mm}] (3,0) to (6,0);
\draw (3,0) to [out=45,in=135] (9,0);
\draw (3,-3) to [out=315,in=270] (9,0);
\draw (3,-3) to (6,-3);
\draw [decorate,decoration={snake,post length=0.1mm}] (3,-3) to (6,-3);
\draw (6,0) to (6,-3);
\draw [decorate,decoration={snake,post length=0.1mm}] (6,0) to (6,-3);
\draw (6,0) to (9,0);
\draw [decorate,decoration={snake,post length=0.1mm}] (6,0) to (9,0);
\draw [fill=white] (0,0) circle [radius=0.3];
\node [above] at (0,0.2) {$a$};
\draw [fill=white] (3,0) circle [radius=0.3];
\node [above] at (3,0.2) {$b$};
\draw [fill=white] (3,-3) circle [radius=0.3];
\node [left] at (2.8,-3) {$e$};
\draw [fill=white] (6,-3) circle [radius=0.3];
\node [right] at (6.2,-2.9) {$d$};
\draw [fill=white] (6,0) circle [radius=0.3];
\node [above] at (6,0.2) {$c$};
\draw [fill=white] (9,0) circle [radius=0.3];
\node [above right] at (9.2,0) {$f$};
\node at (1,-5) {Cyclic path:};
\node at (4,-6) {$a \rightarrow b \rightarrow c \rightarrow d \rightarrow e \rightarrow b \rightarrow c \rightarrow f$};

\node at (15,2) {Processed Network with DP Vertices:};
\draw (12,0) to (15,0);
\draw [decorate,decoration={snake,post length=0.1mm}] (12,0) to (15,0);
\draw (12,0) to [out=300,in=225] (21,-3);
\draw (12,0) to [out=300,in=225] (24,-3);
\draw (15,0) to (18,0);
\draw [decorate,decoration={snake,post length=0.1mm}] (15,0) to (18,0);
\draw (15,0) to [out=270,in=225] (21,-3);
\draw (15,0) to [out=25,in=155] (24,0);
\draw (15,0) to [out=25,in=155] (27,0);
\draw (18,0) to (18,-3);
\draw [decorate,decoration={snake,post length=0.1mm}] (18,0) to (18,-3);
\draw (18,0) to (21,0);
\draw (18,0) to (24,-3);
\draw (18,0) to [out=25,in=155] (27,0);
\draw (18,-3) to (21,-3);
\draw [decorate,decoration={snake,post length=0.1mm}] (18,-3) to (21,-3);
\draw (18,-3) to (24,0);
\draw (21,-3) to (24,-3);
\draw [decorate,decoration={snake,post length=0.1mm}] (21,-3) to (24,-3);
\draw (21,-3) to [out=280,in=290] (27,0);
\draw (21,-3) to (21,0);
\draw (24,-3) to (24,0);
\draw [decorate,decoration={snake,post length=0.1mm}] (24,-3) to (24,0);
\draw (24,-3) to (27,0);
\draw (24,0) to (27,0);
\draw [decorate,decoration={snake,post length=0.1mm}] (24,0) to (27,0);
\draw (24,0) to (21,0);
\draw [fill=white] (12,0) circle [radius=0.3];
\node [above] at (12,0.2) {$a_{0}$};
\draw [fill=white] (15,0) circle [radius=0.3];
\node [above] at (15,0.2) {$b_{0}$};
\draw [fill=white] (18,0) circle [radius=0.3];
\node [above] at (18,0.2) {$c_{0}$};
\draw [fill=white] (18,-3) circle [radius=0.3];
\node [above left] at (17.8,-3) {$d_{0}$};
\draw [fill=white] (21,0) circle [radius=0.3];
\node [above] at (21,0.1) {$d_{1}$};
\draw [fill=white] (21,-3) circle [radius=0.3];
\node [below] at (21,-3.2) {$e_{0}$};
\draw [fill=white] (24,-3) circle [radius=0.3];
\node [right] at (24.2,-2.9) {$b_{1}$};
\draw [fill=white] (24,0) circle [radius=0.3];
\node [above] at (24,0.2) {$c_{1}$};
\draw [fill=white] (27,0) circle [radius=0.3];
\node [above right] at (27.2,0) {$f_{0}$};

\node at (16,-5) {Acyclic path:};
\node at (20,-6) {$a_{0} \rightarrow b_{0} \rightarrow c_{0} \rightarrow d_{0} \rightarrow e_{0} \rightarrow b_{1} \rightarrow c_{1} \rightarrow f_{0}$};

\end{tikzpicture}
\caption{An Example of DP Vertices Pre-processing}
\label{fig:6:prepmap}
\end{figure*}

In this paper, all the unique vertices in the given network $N$ are the \textit{base vertices}; and all the repeated visited vertices in the path $P$ and the duplicate vertices created by DP are \text{sub-vertices}. The relationship between base vertex and its sub-vertex(s) should be public available; otherwise, the path participates cannot recognise the sub-vertices then further reconstruct the exactly original path. Similar to the inheritance in OOP, each sub-vertex inherits the connections from its corresponding base vertex. When inheriting a connection, the sub-vertices have to decide whether such a connection belongs the $G$ or $H$. Figure~\ref{fig:6:prepmap} illustrates the idea of base vertex and sub-vertices. For example, in Figure~\ref{fig:6:prepmap}, because vertex $b$ and $c$ are connected in the network $N$, their sub-vertices: $b_{0}$, $b_{1}$ and $c_{0}$, $c_{1}$ are connected in the processed network $\hat{N}$. Moreover, according to the cyclic path in the original network, $(b_{0}, c_{0}) \in E_{P}$, but $(b_{0}, c_{1}) \notin E_{P}$.

Although such an operation converts a cyclic path to an acyclic one, there is a potential privacy disclosure risk against the adversaries know $m-1$ out of $m$ edges as the background knowledge. In details, since the relationship between the base vertex and sub-vertices is public available, it is easy for the adversaries to learn whether there is a ring or not in the path based on those sub-vertices, then further infer the missing edge. Therefore, to preserve privacy (hide the real ring of the path) against such adversaries, we apply ExpDP to create more sub-vertices (duplicate ones), e.g., vertex $d_{1}$ in Figure~\ref{fig:6:prepmap} is a duplicate sub-vertex created with ExpDP.

To create the duplicate sub-vertices, the main task is about the number of sub-vertices for a given base vertex. To minimise the overall number of vertices after pre-processing, we apply ExpDP to sample a number from a range $[num, \max\{maxNum, 2\}]$ for each base vertex, where $num$ is the number of appearance of a given base vertices in the original path, $maxNum = \max\{num\}$ for all base vertices, e.g., $maxNum = 2$ in the cyclic path in Figure~\ref{fig:6:prepmap} because the maximum appearance of a vertex is 2, comes from both vertex $b$ and $c$. 

To confuse the adversaries whether there is a ring or not in the original path $P$, we want the numbers of sub-vertices of base vertices are indistinguishable. That is, the base vertex with smaller degree (means less chances to be visited repeatedly) would have higher probability to have more duplicate sub-vertices. For simplicity, in this paper, we have the following quality function to measure the number ($n$) of duplicate sub-vertices for base vertex $i$, where $d_{i}$ is the degree of base vertex $i$:
\begin{equation}
\label{eq:6:qv}
q_{v}(n,i) = \max\{n\} - n + \max\{d_{i}\} - d_{i}.
\end{equation}
According to the adversaries' background knowledge, the sensitivity of $q_{v}(n,i)$ is determined by the maximum degree difference when there is an edge in the path missing. That is, $\Delta q_{v} = \max\{|q_{v}(n,i) - q_{v}(n,i^{\prime})|\} = 1, \forall$ base vertex $i$.

Algorithm~\ref{alg:6:dpv} shows how we convert a cyclic path to an acyclic one while adding duplicate sub-vertices to the processed network with ExpDP (a.k.a. differentially private ring removal). Lemma~\ref{lem:6:dpv} proves the differential privacy guarantee of Algorithm~\ref{alg:6:dpv}.
\begin{algorithm}[!ht]
\footnotesize
\caption{Pre-process Differentially Private Vertices (a.k.a. Differentially Private Ring Removal).}
\label{alg:6:dpv}
\SetKwInOut{Input}{Input}
\SetKwInOut{Output}{Output}
\Input{
  A network $N = (V,E)$;\\
  A path $P = \{v_{1}, v_{2}, \dots, v_{n}\}$, $n \geq |V|$;\\
  A quality function: $q_{v}$;\\
  A privacy budgets: $\epsilon_{v}$.
}
\Output{
  A Hamilton network $\hat{N} = (\hat{V}, \hat{E})$;\\
  An acyclic path $\hat{P}$.
}
\BlankLine
$(\hat{V}, \hat{E})$ $\leftarrow$ $(V, E)$\;
$\hat{P}$ $\leftarrow$ $P$\;
$maxDegree$ $\leftarrow$ $\max\{v.degree \text{ in } P, \forall v \in V\}$\;
\For{each $v \in V$}{
  $v.subNum$ $\leftarrow$ Sample a number from $[v.baseNum, maxDegree]$ with $q_{v}$ (Eq.~\eqref{eq:6:qv}) and $\epsilon_{v}$\;
  \For{$i \leftarrow$ 2 to $v.subNum$}{
    $v.sub_{i - 1}$ $\leftarrow$ $v$\;
    \uIf{$i \leq v.baseNum$}{
      replace $i$th $v \in \hat{P}$ with $v.sub_{i - 1}$\;
    }
    \Else{
      $\hat{P}$ $\leftarrow$ $\hat{P}.push(v.sub_{i - 1})$\;
    }
  }
}
\For{each $v \in V$}{
  \For{each $u \in V$, that $e = (v,u) \in E$}{
    \For{each $u.sub_{i} \notin \hat{V}$}{
      $\hat{V}$ $\leftarrow$ $\hat{V} + \{u.sub_{i}\}$\;
      $\hat{E}$ $\leftarrow$ $\hat{E} + \{(v,u.sub_{i})\}$\;
    }
  }
}
\Return: $\hat{N}$, $\hat{P}$\;
\end{algorithm}

\begin{lemma}
\label{lem:6:dpv}
Algorithm~\ref{lem:6:dpv} is $\epsilon_{v}$-differentially private.
\end{lemma}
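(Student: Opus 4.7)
The plan is to isolate where randomness enters Algorithm~\ref{alg:6:dpv} and then invoke the exponential mechanism together with the post-processing property of differential privacy. Inspection of the pseudocode shows that the only random choice is the line \emph{``$v.subNum \leftarrow$ Sample a number from $[v.baseNum, maxDegree]$ with $q_v$ and $\epsilon_v$''}, which is executed once per base vertex $v \in V$. Every subsequent step --- duplicating $v$ into $v.sub_{i-1}$, substituting sub-vertices into $\hat{P}$ or pushing them to the tail, and propagating each edge $(v,u) \in E$ to all copies of $u$ --- is a deterministic function of the input network $N$, the input path $P$, and the tuple of sampled counts $\{v.subNum\}_{v \in V}$. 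Since DP is immune to post-processing, it suffices to prove that the mechanism outputting the tuple of sampled counts is $\epsilon_v$-differentially private.

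For a single base vertex $v$, the sampler is literally the exponential mechanism of Definition~\ref{def:expMech} with quality function $q_v(n,v)$ defined in Eq.~\eqref{eq:6:qv} and privacy budget $\epsilon_v$. Because the paper has already established $\Delta q_v = 1$ under the ``one missing edge'' neighbouring relation of Section~\ref{sec:6:background}, plugging this sensitivity into Definition~\ref{def:expMech} gives, for any candidate count $n$ and any pair of neighbouring networks $N, N'$,
\[
\frac{\Pr[v.subNum = n \mid N]}{\Pr[v.subNum = n \mid N']} \leq \exp(\epsilon_v),
\]
so each individual call is $\epsilon_v$-DP.

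The main obstacle is the aggregation step: the mechanism invokes the sampler once per vertex, so a naive application of sequential composition would yield $|V|\cdot\epsilon_v$-DP rather than the claimed $\epsilon_v$-DP. The key observation I would use to close this gap is structural: the adversary's neighbouring relation perturbs only a single edge $e = (u,w)$, and the quality function $q_v(n,v)$ depends on the network only through $d_v$ and $\max_i d_i$. For all $v \notin \{u,w\}$ that are not the unique degree maximiser, the quality landscape is identical on $N$ and $N'$, so those samples contribute no privacy loss, and the analysis reduces to the handful of samples whose quality function actually shifts. I would then argue that, when $\Delta q_v = 1$ is interpreted as a bound on the joint change of the quality vector rather than per-vertex, the product of per-vertex ratios telescopes into a single $\exp(\epsilon_v)$ factor --- equivalently, treating the whole sampling step as one exponential mechanism over the product space with quality $\sum_v q_v$ and matching sensitivity.

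If that structural argument does not recover the stated bound cleanly, the fallback is to tighten the statement: bound the composition by the number of vertices whose quality actually changes (at most two plus the degree-maximiser), which gives $\mathcal{O}(\epsilon_v)$-DP with a small explicit constant. Either way, once the sampling step is shown to be $\epsilon_v$-DP, the lemma follows immediately from post-processing on the deterministic construction of $(\hat{V}, \hat{E}, \hat{P})$.
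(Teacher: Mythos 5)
Your reduction is sound as far as it goes and mirrors the paper's own decomposition: the only randomness is the per-vertex draw of $v.subNum$ via the exponential mechanism with $q_{v}$ from Eq.~\eqref{eq:6:qv}, everything afterwards is post-processing, and for every vertex whose quality function is unchanged between $N$ and $N^{\prime}$ the probability ratios cancel (the paper writes exactly this cancellation over the vertices with $v.degree = v^{\prime}.degree$ in Eq.~\eqref{eq:6:dp1}). The gap is precisely at the step you yourself flag as the obstacle. The paper closes it by treating the remaining per-vertex samplers (the endpoints of the missing edge) under the \emph{parallel} composition theorem of DP, paying only $\max\{\epsilon_{v}\}=\epsilon_{v}$ rather than summing budgets; your proposal never reaches an argument of this kind. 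Your primary route --- viewing the whole sampling stage as one exponential mechanism on the product space with quality $\sum_{v} q_{v}$ and ``matching sensitivity'' --- does not give the stated bound: deleting one edge lowers the degree of \emph{both} endpoints, so the sensitivity of the summed quality is at least $2$ (and can be larger still, because $\max_{i}\{d_{i}\}$ appears in every $q_{v}$, so if the unique degree maximiser is an endpoint the quality of \emph{every} vertex shifts --- this also contradicts your claim that only $u$, $w$ and the maximiser are affected). With that sensitivity the joint-mechanism argument yields $2\epsilon_{v}$-DP at best, not $\epsilon_{v}$-DP.

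Your declared fallback --- composing only over the vertices whose quality changes, giving $\mathcal{O}(\epsilon_{v})$ with a small constant --- is an honest bound but proves a strictly weaker statement than Lemma~\ref{lem:6:dpv}, so as written the proposal does not establish the lemma. To match the paper you would need to argue why the two (or more) affected samplers should be charged a \emph{maximum} rather than a \emph{sum} of budgets, i.e.\ reproduce (or critically examine) the paper's appeal to parallel composition; note that this is also the tersest and most delicate step of the paper's own proof, since the two endpoint samplers both depend on the same missing edge and hence do not obviously act on disjoint portions of the data.
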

\begin{proof}
In our vertices pre-processing, we add duplicate vertices to the original network with quality function $q_{v}$ (Equation~\eqref{eq:6:qv}) and privacy budget $\epsilon_{v}$, we have Equation~\eqref{eq:6:dp1} to calculate its overall privacy budget for this step, where $V_{P}$ is the set of real vertices in the path $P$ ($\exists u,v \in V_{P}, u = v$), $V_{m}$ is the set of mixed real and duplicate vertices. Note that, the difference between $V_{P}$ and $V^{\prime}_{P}$ (adversaries') is that we have two pairs of $(v, v^{\prime})$ that $|v.degree - v^{\prime}.degree| = 1$, where $v \in V_{P}$, $v^{\prime} \in V^{\prime}_{P}$, $v = v^{\prime}$. Since we assume that the path is acyclic, we split the set $V_{P}$ to two sub sets: $V_{s}$ and $V_{r}$, where $V_{s}$ contains the vertices appear only once in the path, $V_{r}$ contains the vertices are repeated visited.
\begin{equation}
\label{eq:6:dp1}
\begin{split}
& \frac{\Pr[V_{P} \rightarrow V_{m}]}{\Pr[V^{\prime}_{P} \rightarrow V_{m}]}\\
= & \frac{\Pr[V_{s} \rightarrow V_{m0}] \times \Pr[V_{r} \rightarrow V_{m1}]}{\Pr[V^{\prime}_{s} \rightarrow V_{m0}] \times \Pr[V^{\prime}_{r} \rightarrow V_{m1}]} \\
= & \prod_{v.degree=v^{\prime}.degree}\frac{\Pr[v.subNum = s]}{\Pr[v^{\prime}.subNum = s]} \\
& \times \prod_{v.degree\neq v^{\prime}.degree}\frac{\Pr[v.subNum = s]}{\Pr[v^{\prime}.subNum = s]}\\
= & \prod_{v.degree\neq v^{\prime}.degree}\frac{\Pr[v.subNum = s]}{\Pr[v^{\prime}.subNum = s]}\\
\leq & \exp(\max\{\epsilon_{v}\}) = \exp(\epsilon_{v}).
\end{split}
\end{equation}
The last step of Equation~\eqref{eq:6:dp1} comes from the parallel composition of differential privacy~\cite{MCSHERRY2009}. Therefore, this lemma holds.
\end{proof}

\subsubsection{Differentially Private Edges Pre-processing}
When having a network $N$ contains an acyclic path $P$, according to whether an edge belongs to the path or not, we split the processed network $N = (V, E)$ into two graphs: $G = (V_{G} = V, E_{G} = E \setminus E_{P})$ and $H = P$. Then to fit Rule~\ref{itm:6:fakeedge}, we turn all the non-existent vertex connections in this network to edges either in $G$ or $H$ randomly by ExpDP.

To do so, we firstly create a vertices relationship matrix $\boldsymbol{R}_{|V| \times |V|}$ for the path according to the real topology of the network, where 
\begin{equation}
\label{eq:6:relation}
\left\{
\begin{matrix*}[l]
r_{u,v} = 2, & \text{if $e = (u,v) \in E \setminus E_{P}$;}\\
r_{u,v} = 1, & \text{if $e = (u,v) \in E_{P}$;}\\
r_{u,v} = 0, & \text{if $e=(u,v) \notin E$;}\\
r_{u,v} = -1, & \text{if $u = v$.}
\end{matrix*}
\right.
\end{equation}

Now we shall replace the non-existent edges ($r = 0$) to fake edges either in the path ($r = 1$) or not ($r = 2$) with ExpDP. To conceal any possible missing edge as the adversaries' background knowledge, we want to have more fake edges. Therefore $r = 2$ should have higher quality when implementing ExpDP for adding fake edges. Because the adversaries have different number of edges than us, to measure the quality of $r = 1$ and $r = 2$, in this paper, we have $q_{e}(2) = (|E|-1)/|E|$, $q_{e}(1) = 1/|E|$, namely
\begin{equation}
\label{eq:6:qe}
q_{e}(x) = \frac{|E| - 2}{|E|}x - \frac{|E| - 3}{|E|},
\end{equation}
where $\Delta q_{e} = |q_{e}(1) - q_{e}(2)| = (|E| - 2)/|E| \approx 1$.

Algorithm~\ref{alg:6:dpe} shows our differentially private edges pre-processing. Lemma~\ref{lem:6:dpe} proves the differential privacy guarantee of Algorithm~\ref{alg:6:dpe}.
\begin{algorithm}[!ht]
\footnotesize
\caption{Pre-process Differentially Private Edges.}
\label{alg:6:dpe}
\SetKwInOut{Input}{Input}
\SetKwInOut{Output}{Output}
\Input{
  A network $N = (V,E)$;\\
  A path $P = \{v_{1}, v_{2}, \dots, v_{n}\}$;\\
  A quality function: $q_{e}$;\\
  A privacy budgets: $\epsilon_{e}$.
}
\Output{
  A vertices relational matrix $\boldsymbol{R}_{|V|\times|V|}$.
}
\BlankLine
$\boldsymbol{R}_{|V|\times|V|} \leftarrow$ Traverse $N$ from $v$, values in $\boldsymbol{R}$ determined by Eq.~\eqref{eq:6:relation}\;
\For{each $r_{u,v} = 0$}{
  $r_{u,v}$ $\leftarrow$ 1 or 2 with $q_{e}$ (Eq.~\eqref{eq:6:qe}) and $\epsilon_{e}$\;
}
\Return: $\boldsymbol{R}_{|V|\times|V|}$\;
\end{algorithm}

\begin{lemma}
\label{lem:6:dpe}
Algorithm~\ref{alg:6:dpe} is $(\epsilon_{e} + \ln 2)$-differentially private.
\end{lemma}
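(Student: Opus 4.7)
My plan is to exploit the very sparse way in which neighboring inputs affect Algorithm~\ref{alg:6:dpe}: the algorithm is deterministic on every entry with $r_{u,v} \neq 0$ and applies an independent ExpDP draw (with budget $\epsilon_e$) on every entry with $r_{u,v} = 0$. For a neighboring pair $N, N'$ differing in the adversary's single missing edge $e^* = (u^*, v^*)$, the corresponding relational matrices $\boldsymbol{R}$ and $\boldsymbol{R}'$ agree on every coordinate except $(u^*, v^*)$: in $\boldsymbol{R}$ that entry is $1$ or $2$ (depending on whether $e^*$ lies on $P$), while in $\boldsymbol{R}'$ it is $0$.

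First I would factor the output distribution coordinate-wise. For any output matrix $\boldsymbol{S}$, every coordinate $(u,v) \neq (u^*, v^*)$ contributes an identical factor to $\Pr[\mathcal{A}(\boldsymbol{R}) = \boldsymbol{S}]$ and $\Pr[\mathcal{A}(\boldsymbol{R}') = \boldsymbol{S}]$: the deterministic entries either agree or kill both probabilities, and the common $0$-entries are perturbed by independent identically distributed ExpDP samples. The privacy ratio therefore collapses to the contribution at $(u^*, v^*)$, where $\mathcal{A}(\boldsymbol{R})$ outputs $r_{u^*, v^*}$ with probability $1$ while $\mathcal{A}(\boldsymbol{R}')$ samples that entry by ExpDP. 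Joint positivity forces $S_{u^*, v^*} = r_{u^*, v^*}$, and the full ratio simplifies to $1/\Pr_{\mathrm{ExpDP}}[r_{u^*, v^*}]$.

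Next I would lower-bound this single-coordinate ExpDP probability by the standard two-sided bound: since the sampling range $\{1, 2\}$ has size $2$ and the paper's quality function satisfies $q_e(2) - q_e(1) = (|E| - 2)/|E| = \Delta q_e$, bounding the numerator below by $\exp(\epsilon_e q_{e,\min}/(2\Delta q_e))$ and the normalizer above by $2\exp(\epsilon_e q_{e,\max}/(2\Delta q_e))$ yields $\Pr_{\mathrm{ExpDP}}[t] \ge \tfrac{1}{2}\exp(-\epsilon_e/2) \ge \tfrac{1}{2}\exp(-\epsilon_e)$. The overall ratio is therefore at most $2\exp(\epsilon_e) = \exp(\epsilon_e + \ln 2)$, yielding $(\epsilon_e + \ln 2)$-DP as claimed.

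The main obstacle is conceptual rather than computational: one must notice that the extra $\ln 2$ on top of the native ExpDP budget is not loose slack but a genuine consequence of the asymmetry between $\boldsymbol{R}$ (deterministic at $(u^*, v^*)$) and $\boldsymbol{R}'$ (randomized there); it is precisely the logarithm of the size of the ExpDP range. Every other entry contributes factor $1$ to the ratio by parallel composition, so no further loss accrues.
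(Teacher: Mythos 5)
Your proposal is correct and follows essentially the same route as the paper's own proof: cancel the identically distributed (or identically deterministic) entries, reduce the privacy ratio to the reciprocal of the single ExpDP probability at the missing edge's entry, and bound that reciprocal by $2\exp(\epsilon_{e}) = \exp(\epsilon_{e}+\ln 2)$ using the size-two range and $q_{e}(2)-q_{e}(1)=\Delta q_{e}$. Your version is marginally more careful (handling the zero-probability mismatch case and not assuming the missing edge lies on $P$), but the decomposition and the final bound are the same as in the paper.
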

\begin{proof}
In our edges pre-processing, because we turn all the non-existent connections in the network $N$ to an edge in either graph $G$ or graph $H$ with quality function $q_{e}$ (Equation~\eqref{eq:6:qe}) and privacy budget $\epsilon_{e}$, we have Equation~\eqref{eq:6:dp2} to calculate its overall privacy budget for this step.
\begin{equation}
\label{eq:6:dp2}
\begin{split}
& \frac{\Pr[N \rightarrow (G,H)]}{\Pr[N^{\prime} \rightarrow (G,H)]}\\
= & \prod_{edgesToG}\frac{\exp(\frac{\epsilon_{e} q_{e}(2)}{2\Delta q_{e}})}{\exp(\frac{\epsilon_{e} q_{e}(2)}{2\Delta q_{e}})} \\
& \times \prod_{edgesToH}\frac{\exp(\frac{\epsilon_{e} q_{e}(1)}{2\Delta q_{e}})}{\exp(\frac{\epsilon_{e} q_{e}(1)}{2\Delta q_{e}})} \times \frac{1}{\Pr[edge( = E \setminus E^{\prime}) ToH]}\\
= & \frac{\exp(\frac{\epsilon_{e} q_{e}(2)}{2\Delta q_{e}}) + \exp(\frac{\epsilon_{e} q_{e}(1)}{2\Delta q_{e}})}{\exp(\frac{\epsilon_{e} q_{e}(1)}{2\Delta q_{e}})}\\
\leq & 2\exp(\epsilon_{e}) = \exp(\epsilon_{e} + \ln 2).
\end{split} 
\end{equation}
Therefore, this lemma holds.
\end{proof}



\subsection{Graph-based Path Producing}
In this section, we have Algorithm~\ref{alg:6:main} to generate our graph-based path from the differentially private matrix $\boldsymbol{R}_{|V|\times|V|}$. In Algorithm~\ref{alg:6:main}, we first call Algorithm~\ref{alg:6:dpv} and Algorithm~\ref{alg:6:dpe} to pre-process a differentially private network, then we insert each vertex of the path into the graph $G_{P}$ one by one by calling Algorithm~\ref{alg:6:insert} in Line 3 of Algorithm~\ref{alg:6:main} to have an initial $G_{P}$ for publishing. To recover the visiting order of the original path with $G_{P}$, we check two groups of vertices: roots and leaves. The group, contains the vertex which is closer to the real first vertex in the path $P$, will serve as the roots in the final output. That is, to fit Rule~\ref{itm:6:root}, we may turn $G_{P}$ upside down to have the final output. This operation is implemented from Line 4 to Line 5 in Algorithm~\ref{alg:6:main}.

\begin{algorithm}[!ht]
\footnotesize
\caption{Publish Graph-based Differentially Private  Path.}
\label{alg:6:main}
\SetKwInOut{Input}{Input}
\SetKwInOut{Output}{Output}
\Input{
  A network $N = (V,E)$;\\
  A path $P = \{v_{1}, v_{2}, \dots, v_{n}\}$;\\
  Two quality functions: $q_{v}$ and $q_{e}$;\\
  Two privacy budgets: $\epsilon_{v}$ and $\epsilon_{e}$.
}
\Output{
  A graph-based path $G_{P}$.
}
\BlankLine
$\hat{N} = (\hat{V}, \hat{E})$, $\hat{P}$ $\leftarrow$ Call Alg.~\ref{alg:6:dpv} with $(N, P, q_{v}, \epsilon_{v})$\;
$\boldsymbol{R}$ $\leftarrow$ Call Alg.~\ref{alg:6:dpe} with $(\hat{N}, \hat{P}, q_{e}, \epsilon_{e})$\;
$(\hat{V},\hat{E})$ $\leftarrow$ Call Alg.~\ref{alg:6:insert} with $(\hat{P}, \boldsymbol{R})$\;
\If{$u \in \hat{V}_{leaves}$ is closer to $v_{1}$}{
  $\hat{V}$ $\leftarrow$ reverse the order of layers in $\hat{V}$\;
}
\Return: $G_{P}$ $\leftarrow$ $(\hat{V},\hat{E})$\;
\end{algorithm}

According to the sequential composition of differential privacy~\cite{MCSHERRY2009}, based on Lemma~\ref{lem:6:dpv} and Lemma~\ref{lem:6:dpe}, we have Theorem~\ref{thm:6:privacy} to study the overall privacy budget of differential privacy of Algorithm~\ref{alg:6:main}.
\begin{theorem}
\label{thm:6:privacy}
Algorithm~\ref{alg:6:main} is $(\epsilon_{v} + \epsilon_{e} + \ln 2)$-differentially private.
\end{theorem}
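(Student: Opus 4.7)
The plan is to reduce the theorem to a direct application of the \emph{sequential composition} theorem of differential privacy together with the \emph{post-processing invariance} property, so almost all the work has already been done in Lemmas~\ref{lem:6:dpv} and~\ref{lem:6:dpe}. What remains is a careful accounting of which lines of Algorithm~\ref{alg:6:main} consume privacy budget and which are deterministic functions of the already-randomised intermediate outputs.

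First I would parse Algorithm~\ref{alg:6:main} into its four substantive pieces: (i) Line~1 invokes Algorithm~\ref{alg:6:dpv} on the raw input $(N,P)$, which by Lemma~\ref{lem:6:dpv} is $\epsilon_v$-differentially private; (ii) Line~2 invokes Algorithm~\ref{alg:6:dpe} on the already-randomised pair $(\hat{N},\hat{P})$, which by Lemma~\ref{lem:6:dpe} contributes $\epsilon_e + \ln 2$ to the privacy budget; (iii) Line~3 calls the insertion routine (Algorithm~\ref{alg:6:insert}) on the DP matrix $\boldsymbol{R}$ and the processed path $\hat{P}$; (iv) Lines~4--5 conditionally reverse the layer order of $\hat{V}$ based on which of the root or leaf set is closer to $v_1$. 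Because steps (iii) and (iv) sample no fresh randomness and depend on their inputs only through the already-released DP outputs, they qualify as post-processing and therefore add nothing to the cumulative privacy cost.

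With that in hand, the proof itself is one line of arithmetic: by the sequential composition theorem (McSherry--Talwar, as cited via~\cite{MCSHERRY2009}), running Algorithm~\ref{alg:6:dpe} on the output of Algorithm~\ref{alg:6:dpv} yields a mechanism whose privacy budget is the sum $\epsilon_v + (\epsilon_e + \ln 2)$, and post-processing by Lines~3--5 preserves this guarantee, giving the claimed $(\epsilon_v + \epsilon_e + \ln 2)$-DP for the overall Algorithm~\ref{alg:6:main}.

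The only genuine obstacle is the bookkeeping check that Lines~3--5 really are post-processing in the strict DP sense. In particular one must verify that the comparison ``$u \in \hat{V}_{\text{leaves}}$ is closer to $v_1$'' in Line~4 reads only from the DP-released object $\hat{V}$ and from the reference vertex $v_1$, which is assumed public (the adversary's missing information is an edge, not $v_1$). Once that is confirmed, no new dependence on the private edge set $E_P$ is introduced beyond what Lemmas~\ref{lem:6:dpv} and~\ref{lem:6:dpe} already charge for, and sequential composition closes the argument.
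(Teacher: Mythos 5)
Your proposal is correct and takes essentially the same route as the paper: the paper derives Theorem~\ref{thm:6:privacy} directly by applying sequential composition to Lemma~\ref{lem:6:dpv} ($\epsilon_{v}$) and Lemma~\ref{lem:6:dpe} ($\epsilon_{e}+\ln 2$) and summing the budgets. Your additional bookkeeping that Lines~3--5 of Algorithm~\ref{alg:6:main} are budget-free post-processing is left implicit in the paper, but it does not change the approach.
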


Furthermore, Lemma~\ref{lem:6:roots} and Proposition~\ref{prop:6:firstvertex} study the probability of recovering the correct the path $P$ from the published $G_{P}$.
\begin{lemma}
\label{lem:6:roots}
There are at least two roots in the graph-based path, where two of the roots share an edge of the path. 
\end{lemma}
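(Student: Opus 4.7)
I would split the proof into two parts corresponding to the two assertions: (a) $G_P$ has at least two roots, and (b) two of those roots are joined by an edge of $E_P$.

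For (a), the plan is a short argument by contradiction. Suppose $G_P$ had a unique root $r$. Then every other vertex of $G_P$ is reachable from $r$ via a directed path, so it lies in the same branch as $r$; the characterising property of $G_P$ (namely $u, v$ are in the same branch iff $(u,v) \in E_G$) then yields $(r, v) \in E_G$ for every $v \neq r$. However, $r$ corresponds to some path vertex $v_i$, and since $n \geq 2$ it has at least one path neighbour $v_{i \pm 1}$; by Rule~\ref{itm:6:notsamebranch} we have $(r, v_{i \pm 1}) \in E_P \subseteq E_H$, forcing $r$ and $v_{i \pm 1}$ into \emph{different} branches, a contradiction. Hence $|R| \geq 2$.

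For (b), the plan is to track the execution of Algorithm~\ref{alg:6:insert} and maintain, by induction on the number of inserted vertices, the invariant that after inserting $v_1, \dots, v_i$ (for $i \geq 2$) there is an index $j$ with $v_j$ and $v_{j+1}$ both roots of the current partial $G_P$. The base case $i = 2$ is immediate: the two-vertex subgraph has no directed edges because Rule~\ref{itm:6:notsamebranch} forces $v_1$ and $v_2$ into different branches, so both are roots and they are path-adjacent. For the inductive step, the branch relations of the newly inserted $v_{i+1}$ with every previously inserted $v_m$ are fixed by $\boldsymbol{R}$: same branch iff $(v_m, v_{i+1}) \in E_G$, different branch iff $(v_m, v_{i+1}) \in E_H$ (with Rule~\ref{itm:6:notsamebranch} forcing ``different'' for $m = i$). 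I would case-analyse on the branch relations between $v_{i+1}$ and the current adjacent-roots pair $v_j, v_{j+1}$: if $v_{i+1}$ is placed without demoting either of them the invariant is preserved verbatim; if $v_{i+1}$ is placed as an ancestor of one of them (say $v_j$), then $v_{i+1}$ itself becomes a new root, and the new pair $(v_{i+1}, v_{j+1})$, or another pre-existing adjacent-roots pair, continues to witness the invariant. The reversal step on Lines~4--5 of Algorithm~\ref{alg:6:main} only swaps the ``root'' and ``leaf'' labels and does not affect whether two roots are path-adjacent.

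The hardest case is the one in which inserting $v_{i+1}$ demotes exactly one of the current adjacent roots without $v_{i+1}$ itself being path-adjacent to the survivor. Handling this case relies on the specific placement rule of Algorithm~\ref{alg:6:insert}, which prefers descendant over ancestor placement whenever $\boldsymbol{R}$ permits, so that a forced demotion can only occur when no consistent descendant position exists for $v_{i+1}$, and in that situation the subcase analysis can pin down a surviving path-adjacent pair of roots. If this algorithmic tie-breaking does not close the case cleanly, I would fall back on a global argument on the final $G_P$: list the roots in path order, assume toward contradiction that no two are path-adjacent, and derive a structural contradiction from the mandatory $E_G$-relations that connect each non-root strictly between two roots to at least one root ancestor.
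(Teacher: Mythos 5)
Your part (a) is essentially the paper's own argument: assume a unique root, note that every other vertex, in particular the root's path-neighbour, would then lie in the same branch as the root, and contradict Rule~\ref{itm:6:notsamebranch}. For part (b), however, you take a genuinely different and much heavier route than the paper, and it is not closed. The paper does not induct over the execution of Algorithm~\ref{alg:6:insert} at all; it reads the second assertion off the same one-step contradiction: for the root $v$, Rule~\ref{itm:6:notsamebranch} forbids its path-neighbour $u$ from ever lying in $v$'s branch, and the paper concludes that $u$ must itself be a root, so the two roots $u,v$ share the path edge $(u,v)$. Your insertion-order induction, by contrast, leaves exactly the case you flag as hardest unresolved: when the newly inserted vertex demotes one of the current adjacent roots without being path-adjacent to the survivor, you appeal to an unproven tie-breaking property of Algorithm~\ref{alg:6:insert} (``prefers descendant over ancestor placement''), and your fallback ``global argument'' on the finished $G_{P}$ is only a plan, with no derivation of the promised contradiction. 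In addition, the induction tacitly assumes the partial graph grows monotonically, but Algorithm~\ref{alg:6:insert} backtracks (Lines 18--19) and may invoke Algorithm~\ref{alg:6:split}, which splits already-placed vertices and rewrites $\boldsymbol{R}$; you do not show the adjacent-roots invariant survives these restructurings. So as submitted, assertion (b) is not established; either complete the global argument on the final $G_{P}$ (list the roots, use that any non-root path-neighbour of a root must be a descendant of some other root, and derive the contradiction), or follow the paper's direct argument, rather than tracking the algorithm step by step.
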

\begin{proof}
Assume there is only one root, $v$, in the graph-based path, then $v$'s neighbour vertex in the path, $u$, will be in the same branch as $v$. This is contradict to our first rule for this graph-based path, namely, $v$ and $u$ should be placed into different branches. That is, at least both $u$ and $v$ are the roots of the graph-based path. Therefore, this corollary holds.
\end{proof}

\begin{proposition}
\label{prop:6:firstvertex}
Given a path $P = \{v_{1}, v_{2}, \dots, v_{n}\}$, our graph-based path $G_{P}$ (without DP injection) has at least $1/2$ probability to confirm the visiting order of $P$ in expected case.
\end{proposition}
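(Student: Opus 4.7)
My plan is to reduce the problem of recovering the visiting order from $G_{P}$ to the binary question of identifying which of the two path-endpoints is $v_{1}$, and then to bound the probability of a correct identification.

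\textbf{Step 1 (reconstructing the chain).} Using Rule~\ref{itm:6:notsamebranch} and Rule~\ref{itm:6:samebranch}, each participant recognises an edge of $P$ as one that crosses two distinct branches of $G_{P}$ and a non-path edge as one that lies inside a single branch. This suffices to recover $P$ as an unordered chain inside $G_{P}$ and, in particular, to locate its two endpoints $\{v_{1},v_{n}\}$ (the two vertices with exactly one path-neighbour in the reconstructed chain).

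\textbf{Step 2 (locating $v_{1}$ via the root/leaf groups).} By Lemma~\ref{lem:6:roots}, $G_{P}$ has at least two roots; by the symmetric argument it also has at least two leaves. Combined with the alternation forced by Rule~\ref{itm:6:notsamebranch}, the two endpoints $v_{1}$ and $v_{n}$ must each land in either the root group or the leaf group. Lines~4--5 of Algorithm~\ref{alg:6:main} use the publisher's knowledge of $v_{1}$ to orient $G_{P}$ so that the root group is the one containing (or nearest to) $v_{1}$. Applying Rule~\ref{itm:6:root}, the participants then pick the root-side endpoint as their guess for the starting vertex, after which the alternation determines the full visiting order.

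\textbf{Step 3 (case analysis on the layers of the endpoints).} I would split on the relative positions of $v_{1}$ and $v_{n}$. (a) If they lie in different layers, the reversal step of Algorithm~\ref{alg:6:main} deterministically places $v_{1}$ in the root group, and the participants recover the visiting order with probability~$1$. (b) If they lie in the same layer, the purely structural description of $G_{P}$ cannot distinguish them, so the participants must choose $v_{1}$ uniformly at random between the two candidates, succeeding with probability~$1/2$. In either case the success probability is at least $1/2$, yielding the stated bound in expectation.

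\textbf{Main obstacle.} The delicate point is tightening case~(b): I would need to verify that the insertion procedure invoked in Line~3 of Algorithm~\ref{alg:6:main} does not inadvertently encode the publisher's orientation through some auxiliary feature (e.g.\ by always breaking insertion ties in a consistent direction) that would let the participants infer $v_{1}$'s position beyond the root/leaf dichotomy; otherwise the $1/2$ bound would be loose on the wrong side. A secondary but routine issue is confirming that the distance comparison used by Line~4 is well defined on the deterministic (non-DP) version of $G_{P}$, which is immediate once the chain of Step~1 is reconstructed.
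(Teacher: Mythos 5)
There is a genuine gap in your Step 3. Your dichotomy assumes that whenever the structure of $G_{P}$ does not single out $v_{1}$, the participants face a symmetric tie and succeed with probability $1/2$ by a uniform guess. But the reconstruction rule the paper actually uses (and that your Step 2 invokes) is the closeness heuristic of Rule~\ref{itm:6:root}: the participants take as the start the endpoint nearest to the root group. This rule can be \emph{actively misleading}, not merely ambiguous: if the vertex closest to $v_{n}$ also lands in the root group and is strictly closer to $v_{n}$ than the vertex closest to $v_{1}$ is to $v_{1}$, the participants deterministically pick $v_{n}$ and succeed with probability $0$ on that instance. Your case~(b) ("same layer $\Rightarrow$ fair coin $\Rightarrow 1/2$") does not cover this configuration, so "at least $1/2$ in either case" does not follow.

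This missing configuration is precisely what the paper's proof spends its effort on: it is the reason the statement is phrased "in expected case." The paper uses Lemma~\ref{lem:6:roots} (and its mirror image for leaves) to get $k\geq 4$ root-and-leaf vertices, argues that over all possible graphs $G$ the bad configuration (both the vertex nearest $v_{1}$ and the vertex nearest $v_{n}$ being roots, with the latter relatively closer) occurs with probability at most $2/\binom{k}{2}=\frac{4}{k(k-1)}\leq 1/3$, giving success probability at least $2/3$ there; it then treats separately the degenerate network with $|E|=|V|-1$, where both endpoints are always roots and the symmetric tie yields exactly $1/2$, and takes the minimum. To repair your argument you would need to either show the misleading configuration cannot occur (it can), or bound its probability in the averaged sense as the paper does; without that quantitative step, only the tie case supports the $1/2$ bound and the proof is incomplete. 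Your Steps 1--2 (chain recovery from Rules~\ref{itm:6:notsamebranch} and~\ref{itm:6:samebranch}, orientation via Lines 4--5 of Algorithm~\ref{alg:6:main}) are consistent with the paper and unobjectionable.
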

\begin{proof}
According to Lemma~\ref{lem:6:roots}, we assume there are $k \geq 4$ vertices are roots and leaves in $G_{P}$. There is only one case that $G_{P}$ cannot confirm the first node of $P$: Among the $k$ vertices, vertex $v$, which is the closest one to the first node, and vertex $u$, which is the closest one to the last node, are both roots. In expected cases, the probability of such case in all possible graph $G$ is $2/{k\choose 2} = \frac{4}{k(k-1)}$ because the roots set and the leaves set are exchangeable. Moreover, $u$ should be closer to the last node than $v$ to the first node. We assume it has $Pr \in [0,1]$ probability to have such case. Then $G_{P}$ has $1 - Pr \times \frac{4}{k(k-1)}$ probability to confirm the real visiting order of $P$, which is at least $2/3$. In addition, consider a special network with $n$ vertices and $n-1$ edges. Give a $P$ from such network, because both first node and last node will be the root in the graph-based path, we have $1/2$ probability to confirm the real first node. Therefore, we have probability $1/2 = \min\{1/2, 2/3\}$ to confirm the visiting order of of $P$, so that proposition holds.
\end{proof}

\begin{algorithm}[!ht]
\footnotesize
\caption{Insert Vertex.}
\label{alg:6:insert}
\SetKwInOut{Input}{Input}
\SetKwInOut{Output}{Output}
\Input{
  A path $P$;\\
  A vertices relationship matrix $\boldsymbol{R}$.
}
\Output{
  A vertex superset $\hat{V}$ and an edge set $\hat{E}$
}
\BlankLine
Init: An empty stack $S$, $S_{bak} \leftarrow S$, $layer.top \leftarrow 0$, $layer.bottom \leftarrow 0$, $u \leftarrow P.dequeue()$, $u.layer \leftarrow 0$\;
$S.push(u)$\;
$v \leftarrow P.pop()$\;
\While{$P \neq \emptyset$}{
  \If{$v.visit = 0$}{
    \For{$i$ $\leftarrow$ $layer.top-1$ to $layer.bottom+1$}{
      \If{Call Alg.~\ref{alg:6:good} with $(S,v,\boldsymbol{R},i)$}{
        $v.layerRange$ $\leftarrow$ $v.layerRange + \{i\}$\;
      }
    }
  }
  \uIf{$v.layerRange \neq \emptyset$}{
    $v.layer$ $\leftarrow$ $\max\{v.layerRange\}$\;
    $layer.top$ $\leftarrow$ $\min_{\forall x \in S}\{x.layer, v.layer\}$\;
    $layer.bottom$ $\leftarrow$ $\max_{\forall x \in S}\{x.layer, v.level\}$\;
    $v.visit \leftarrow 1$, $S.push(v)$\;
    $v \leftarrow P.pop()$, $v.visit$ $\leftarrow$ 0\;
    \If{$|S| > |S_{bak}|$}{
      $(S_{bak},\boldsymbol{R}_{bak},P_{bak}, v_{bak})$ $\leftarrow$ $(S,\boldsymbol{R},P, v)$\;
    }
  }
  \Else{
    $P.push(v)$\;
    $v \leftarrow S.pop()$, $v.layerRange$ $\leftarrow$ $v.layerRange - \{v.layer\}$, $v.layer$ $\leftarrow$ $nil$\;
    \If{$v = u$}{
      $(S,\boldsymbol{R}) \leftarrow$ Call Alg.~\ref{alg:6:split} with $(S_{bak}, \boldsymbol{R}_{bak}, v_{bak})$\;
      $P \leftarrow P_{bak}$, $v \leftarrow P.pop()$, $v.visit \leftarrow 0$\;
    }
  }
}
\For{each $v \in S$}{
  \If{$r_{v,x} = 2, \forall x \in S$}{
    $\hat{E} \leftarrow e=(v,x)$\;
  }
  $V_{v.layer}$ $\leftarrow$ $V_{v.layer} + \{v\}$\;
}
$\hat{E} \leftarrow$ Remove duplicate edges\;
$\hat{V} \leftarrow \{V_{i}\}, \forall i \in [layer.top, layer.bottom]$\;
\Return: $(\hat{V}, \hat{E})$\;
\end{algorithm}

Algorithm~\ref{alg:6:insert} is the principal part for producing the graph-based path $G_{P}$. Generally, in Algorithm~\ref{alg:6:insert}, we maintain and track the status of each vertex in the process of inserting to ensure the vertices are correctly placed in the graph-based path. Clearly, because we hide the exact information of the path in the topology of the graph-based path, the layers in $G_{P}$, where we place the vertices, play an important role for us. So we prepare a possible layers set for each vertex against the present graph at the moment when placing the vertex into $G_{P}$ (Line 5 to Line 8 in Algorithm~\ref{alg:6:insert}). To ensure the first node of the path could be placed as the root in the graph-based path, we always use the maximum layer from the set for inserting the vertex (Line 10 to Line 14 in Algorithm~\ref{alg:6:insert}). 

To compute such a possible layers set, we exclude all the bad layers which introduce contradiction to our graph building rules in Algorithm~\ref{alg:6:good}. Claim~\ref{clm:6:trans} studies the way to navigate the bad layers when inserting a vertex $v$ to $G_{P}$.

\begin{algorithm}[!ht]
\footnotesize
\caption{Test Insertion Layer.}
\label{alg:6:good}
\SetKwInOut{Input}{Input}
\SetKwInOut{Output}{Output}
\Input{
  A stack $S$;\\
  A vertex $u$;\\
  A vertices relationship matrix $\boldsymbol{R}$;\\
  A given layer $n$.
}
\Output{
  A boolean status.
}
\BlankLine
\If{$r_{u's\ parent,u's\ child} = 1$}{
  \Return: False\;
}
\For{each $v \in S$}{
  \If{$v.layer = n$ AND $r_{u,v} = 2$}{
    \Return: False\;
  }
  \If{$v.layer > n$ AND $r_{u,v} = 2$ AND $r_{u,v's\ child} = 1$}{
    \Return: False\;
  }
  \If{$v.layer < n$ AND $r_{u,v} = 2$ AND $r_{u, v's\ parent} = 1$}{
    \Return: False\;
  }
}
\Return: True\;
\end{algorithm}

\begin{claim}
\label{clm:6:trans}
The connectivity is transitive in a branch in $G_{P}$.
\end{claim}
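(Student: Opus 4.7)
The plan is to interpret ``connectivity'' within a branch of $G_P$ as the relation $r_{u,v}=2$ (i.e., the edge lies in the non-path subgraph $G$), and to show that if $u,v,w$ lie on the same root-to-leaf branch of $G_P$ with $r_{u,v}=2$ and $r_{v,w}=2$, then $r_{u,w}=2$ as well. The strategy is to argue that ``being in the same branch'' and ``having $r=2$'' are in fact the \emph{same} equivalence relation on the pre-processed vertex set, from which transitivity of $r=2$ along a branch is immediate.

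First I would invoke the privacy-preserving Rule~\ref{itm:6:notsamebranch} and Rule~\ref{itm:6:samebranch} from Section~\ref{sec:6:overview}: any pair with $r=2$ must share a branch, and any pair with $r=1$ must lie in distinct branches. Since Algorithm~\ref{alg:6:dpe} converts every $r_{u,v}=0$ into either $1$ or $2$, these two rules cover every off-diagonal entry of $\boldsymbol{R}$. Thus, to prove that ``same branch'' implies ``$r=2$'', it suffices to exclude the possibility that two vertices in the same branch have $r=1$. Here I would appeal to the layer tests in Algorithm~\ref{alg:6:good}: the three tests there ensure, case by case, that placing a new vertex $w$ at a layer in a branch containing $u$ is rejected whenever $r_{u,w}=1$. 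Concretely, the first test blocks $w$ from becoming a same-layer sibling of an $r=2$ partner; the second and third tests block $w$ from being placed above/below an existing vertex $v$ when the parent/child of $v$ stands in relation $r=1$ to $w$. Combined with the immediate parent-child test $r_{u\text{'s parent},u\text{'s child}}=1$, these invariants close the branch under $r=2$.

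Once this bidirectional equivalence is established, transitivity is a one-liner: if $u,v$ share a branch and $v,w$ share a branch, then $u,v,w$ all lie on the same branch (a path in $G_P$), so in particular $u$ and $w$ share a branch, hence $r_{u,w}=2$. The main obstacle I anticipate is the careful case analysis that the tests of Algorithm~\ref{alg:6:good} -- which only examine \emph{immediate} parent-child neighbours rather than the entire branch -- are strong enough to enforce the global invariant ``no two same-branch vertices have $r=1$.'' I expect this to require a short induction on the layer distance between $u$ and $w$ in the branch, using the fact that the intermediate vertex already satisfies $r=2$ with both endpoints by the inductive hypothesis, which then feeds the third test to reject any conflicting $r=1$ placement.
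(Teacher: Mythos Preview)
The paper does not supply a proof for this claim. It is asserted and then immediately paraphrased in the sentence that follows it: ``all the vertices in one branch should be connected to each other.'' In other words, the paper treats the claim as a direct restatement of construction Rules~\ref{itm:6:notsamebranch} and~\ref{itm:6:samebranch}: since every off-diagonal entry of $\boldsymbol{R}$ is either $1$ or $2$ after Algorithm~\ref{alg:6:dpe}, and since Rule~\ref{itm:6:notsamebranch} forces $r=1$ pairs into distinct branches while Rule~\ref{itm:6:samebranch} forces $r=2$ pairs into a common branch, two vertices share a branch iff their relation is $r=2$; transitivity of ``sharing a branch'' then gives the claim at once. That one-line reduction is precisely the first half of your plan, and at that point you are already done---this matches the paper's (implicit) argument.

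Where your proposal diverges is in the second half: you attempt to \emph{derive} the rules themselves from the local tests of Algorithm~\ref{alg:6:good}, via an induction on layer distance. This is not what the paper does, and it is also where your argument has a real gap. The tests in Algorithm~\ref{alg:6:good} inspect only immediate parents and children, and the paper's own Figure~\ref{fig:6:trans} is an explicit example showing that these local checks can fail to enforce branch-wide transitivity at insertion time---that is exactly why the backtracking in Algorithm~\ref{alg:6:insert} and the splitting in Algorithm~\ref{alg:6:split} exist (cf.\ Theorem~\ref{thm:6:exist}). Your proposed induction step (``the intermediate vertex already satisfies $r=2$ with both endpoints, which then feeds the third test'') does not close the loop, because the third test compares the incoming vertex against $v$'s \emph{immediate} parent, not against an arbitrary ancestor $u$ further up the branch. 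A proof of the claim purely from Algorithm~\ref{alg:6:good} therefore cannot succeed; the claim holds for the final $G_P$ only because the full pipeline (tests, backtracking, splitting) is \emph{designed} to enforce Rules~\ref{itm:6:notsamebranch}--\ref{itm:6:samebranch}, which is how the paper treats it.
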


Claim~\ref{clm:6:trans} shows an important feature of our graph-based path: all the vertices in one branch should be connected to each other. That is, when inserting $v$ to $G_{P}$, a bad layer for $v$ breaks either our Rule~\ref{itm:6:samebranch} for $G_{P}$ or such transitive relation in Claim~\ref{clm:6:trans}. The upper half of Figure~\ref{fig:6:trans} shows a counterexample to Claim~\ref{clm:6:trans}. Given $u, x \in G_{P}$, $r_{u,x} = 2$, when inserting $v \in P$ to $G_{P}$ where $r_{v,u} = r_{v,x} = 1$, there is no good layer for $v$, if $\exists y \in G_{P}$ that $r_{y,v} = r_{y,u} = r_{y,x} = 2$ and $y$'s layer in $G_{P}$ is between $u$'s and $x$'s. This case indicates that there may be bad placement(s) from the existing vertices which are already placed in the present graph. To overcome the bad placement(s), we record each inserted vertex in a stack. When we have a vertex with an empty possible layers set, the stack enables us to backtrack and re-insert the recently inserted vertices until we can successfully insert all the vertices into the graph. This backtracking operation is implemented from Line 18 to Line 19 in Algorithm~\ref{alg:6:insert}. The lower half of Figure~\ref{fig:6:trans} illustrates an example for addressing this issue.
\begin{figure}[!th]
\centering
\begin{tikzpicture}[xscale=0.5, yscale=0.5]
\draw (0,0) to (0,-1.7);
\draw (0,-2.3) to (0,-4);

\draw (2,0) to (2,-1.7);

\draw (4,0) to (4,-1.7);
\draw (4,-2.3) to (4,-4);
\draw [dashed] (6,0) to (4.2,-1.8);
\draw [dashed] (6,-2) to (4.3,-2);
\draw [dashed] (6,-4) to (4.2,-2.2);

\draw (0,-7.3) to (0,-8.7);
\draw (0,-9.3) to (0,-11);

\draw (2,-7) to (2,-8.7);

\draw (4,-7.3) to (4,-8.7);
\draw (4,-9.3) to (4,-11);
\draw (4.2,-7.2) to (6,-9) ;

\draw [fill=white] (0,0) circle [radius=0.3];
\node [above left] at (0,0) {$u$};
\draw [pattern=north east lines] (0,-2) circle [radius=0.3];
\node [above left] at (0,-2) {$y$};
\draw [fill=white] (0,-4) circle [radius=0.3];
\node [above left] at (0,-4) {$x$};

\draw [fill=white] (2,0) circle [radius=0.3];
\node [above left] at (2,0) {$v$};
\draw [pattern=north east lines] (2,-2) circle [radius=0.3];
\node [above left] at (2,-2) {$y$};

\draw [fill=white] (4,0) circle [radius=0.3];
\node [above left] at (4,0) {$u$};
\draw [pattern=north east lines] (4,-2) circle [radius=0.3];
\node [above left] at (4,-2) {$y$};
\draw [fill=white] (4,-4) circle [radius=0.3];
\node [above left] at (4,-4) {$x$};
\draw [fill=white] (6,0) circle [radius=0.3];
\node [above left] at (6,0) {$v$};
\draw [fill=white] (6,-2) circle [radius=0.3];
\node [above left] at (6,-2) {$v$};
\draw [fill=white] (6,-4) circle [radius=0.3];
\node [above] at (6,-3.8) {$v$};

\draw [fill=white] (0,-9) circle [radius=0.3];
\node [above left] at (0,-9) {$u$};
\draw [pattern=north east lines] (0,-7) circle [radius=0.3];
\node [above left] at (0,-7) {$y$};
\draw [fill=white] (0,-11) circle [radius=0.3];
\node [above left] at (0,-11) {$x$};

\draw [fill=white] (2,-7) circle [radius=0.3];
\node [above left] at (2,-7) {$v$};
\draw [pattern=north east lines] (2,-9) circle [radius=0.3];
\node [above left] at (2,-9) {$y$};

\draw [fill=white] (4,-9) circle [radius=0.3];
\node [above left] at (4,-9) {$u$};
\draw [pattern=north east lines] (4,-7) circle [radius=0.3];
\node [above left] at (4,-7) {$y$};
\draw [fill=white] (4,-11) circle [radius=0.3];
\node [above left] at (4,-11) {$x$};
\draw [fill=white] (6,-9) circle [radius=0.3];
\node [above] at (6,-8.8) {$v$};
\node [above] at (-0.1,0.5) {$G_{P}$:};
\node at (1,-2) {$+$};
\node at (3,-2) {$\rightarrow$};
\node [right] at (6.2,0) {(bad: $v$ connects to $x$)};
\node [right] at (6.2,-2) {(bad: $v$ and $y$ not in a branch)};
\node [right] at (6.2,-4) {(bad: $v$ connects to $u$)};

\node [above] at (-0.1,-6.4) {Adjusted $G_{P}$:};
\node at (1,-9) {$+$};
\node at (3,-9) {$\rightarrow$};
\node [right] at (6.2,-9) {(good)};
\end{tikzpicture}
\caption{An Example of Claim~\ref{clm:6:trans}}
\label{fig:6:trans}
\end{figure}

Based on Claim~\ref{clm:6:trans}, we have Theorem~\ref{thm:6:exist} which studies the existence of our graph-based path. 
\begin{theorem}
\label{thm:6:exist}
The graph-based path exists, unless $\exists u \in P$, that $u$ breaks the transitive relation in all possible topology of $G_{P}$.
\end{theorem}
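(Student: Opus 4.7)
The plan is to establish the theorem by a constructive argument based on Algorithm~\ref{alg:6:insert}, pairing a trivial necessity direction with a more subtle sufficiency direction grounded in Claim~\ref{clm:6:trans}.

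First I would dispatch the easy direction: if some $u \in P$ breaks the transitive relation in every possible topology of $G_P$, then by Claim~\ref{clm:6:trans} the set of good layers for $u$ is empty no matter how the preceding vertices are arranged. Hence Algorithm~\ref{alg:6:good} returns False for all layers in the working range, Algorithm~\ref{alg:6:insert} cannot place $u$ under any reachable state, and no $G_P$ exists. This part is a one-paragraph unpacking of definitions.

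For the substantive direction, I would show by induction on the insertion index $i$ that when no vertex breaks the transitive relation in all topologies, Algorithm~\ref{alg:6:insert} maintains and extends a valid partial $G_P$. The inductive hypothesis is that after inserting $v_1, \ldots, v_{i-1}$, every branch of the current stack $S$ satisfies Rules~\ref{itm:6:notsamebranch}--\ref{itm:6:samebranch} together with the transitive connectivity asserted in Claim~\ref{clm:6:trans}. When processing $v_i$, the loop at Lines 6--8 explicitly enumerates every candidate layer in $[layer.top-1, layer.bottom+1]$ and Algorithm~\ref{alg:6:good} filters precisely those that would either break Rule~\ref{itm:6:samebranch} or violate transitivity. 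If $v_i.layerRange$ is nonempty, placing $v_i$ at $\max\{v_i.layerRange\}$ preserves the invariant, completing the inductive step.

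The main obstacle is handling the case in which $v_i.layerRange$ is empty for the current topology even though a globally valid $G_P$ does exist. Here I would argue completeness of the backtracking block (Lines 18--19): popping the most recent vertex and restoring from $S_{bak}$ implicitly performs a depth-first search over alternative placements of earlier vertices, and because each vertex has only a finite, explicitly enumerated set of valid layers, this DFS is finite and exhausts all reachable partial topologies in the fixed insertion order. If the search terminates without ever successfully placing $v_i$, then by construction every arrangement of $v_1, \ldots, v_{i-1}$ leaves $v_i$ with empty layer range, which by Claim~\ref{clm:6:trans} is exactly the statement that $v_i$ breaks the transitive relation in all possible topologies of $G_P$, contradicting the hypothesis. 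Combining the two directions yields the theorem.
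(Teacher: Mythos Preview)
The paper states Theorem~\ref{thm:6:exist} without a proof environment; the only justification is the lead-in sentence ``Based on Claim~\ref{clm:6:trans}, we have Theorem~\ref{thm:6:exist},'' treating the result as an essentially definitional consequence of the transitivity claim. Your constructive argument through the backtracking of Algorithm~\ref{alg:6:insert} is therefore substantially more detailed than anything the paper offers, and it supplies the operational content (why the algorithm succeeds exactly when no such $u$ exists) that the paper leaves implicit.

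Your outline is sound, but one point deserves tightening. Your DFS-completeness claim shows that every layer assignment of $v_1,\ldots,v_{i-1}$ \emph{in the fixed insertion order} leaves $v_i$ with empty layer range; you then identify this with ``$v_i$ breaks the transitive relation in all possible topologies of $G_P$.'' The theorem's phrase is not obviously restricted to a fixed order, so strictly speaking you would need either to argue that the set of realisable partial topologies is order-independent, or to note explicitly that the paper's intended quantifier is over the topologies reachable by Algorithm~\ref{alg:6:insert}. The latter reading is clearly the intended one, since the paper immediately follows the theorem by introducing Algorithm~\ref{alg:6:split} to repair precisely the failure case of Algorithm~\ref{alg:6:insert}; under that reading your proof goes through.
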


According to Theorem~\ref{thm:6:exist}, it is possible to have a network which cannot deliver a graph-based path $G_{P}$. Namely, to insert a vertex $u$ to a $G_{P}$, after trying all the possible layers of each vertex in the present $G_{P}$ by backtracking the stack in Algorithm~\ref{alg:6:insert}, we cannot find a good layer to insert $u$. The upper half of Figure~\ref{fig:6:split} shows an example of bad $G_{P}$ by Theorem~\ref{thm:6:exist} where the present $G_{P}$ is the only possible topology with the given vertices $a$, $b$, $c$, and $d$. As we can see, there is no way to appropriately insert $e$ to the present $G_{P}$.

\begin{figure}[!th]
\centering
\begin{tikzpicture}[xscale=0.5, yscale=0.5]
\draw (0,0) to (0,-3);
\draw (3,0) to (3,-3);
\draw (3,0) to (0,-3);
\draw (6.5,0) to (5,-3);
\draw (6.5,0) to (8,-3);

\draw (10,0) to (10,-3);
\draw (13,0) to (13,-3);
\draw (13,0) to (10,-3);
\draw (10,0) to (13,-3);

\draw (5,-6) to (5,-9);
\draw (8,-6) to (8,-9);
\draw (8,-6) to (5,-9);
\draw (2,-9) to (5,-6);
\draw (11,-6) to (8,-9);
\draw [fill=white] (0,0) circle [radius=0.3];
\node [above left] at (0,0) {$a$};
\draw [fill=white] (3,0) circle [radius=0.3];
\node [above left] at (3,0) {$b$};
\draw [fill=white] (0,-3) circle [radius=0.3];
\node [above left] at (0,-3) {$c$};
\draw [fill=white] (3,-3) circle [radius=0.3];
\node [above left] at (3,-3) {$d$};
\draw [fill=white] (5,-3) circle [radius=0.3];
\node [above] at (5,-2.7) {$a$};
\draw [fill=white] (8,-3) circle [radius=0.3];
\node [above] at (8,-2.7) {$d$};
\draw [fill=white] (6.5,0) circle [radius=0.3];
\node [above left] at (6.5,0) {$e$};

\draw [fill=white] (10,0) circle [radius=0.3];
\node [above left] at (10,0) {$a$};
\draw [fill=white] (13,0) circle [radius=0.3];
\node [above left] at (13,0) {$b$};
\draw [fill=white] (10,-3) circle [radius=0.3];
\node [above left] at (10,-3) {$c$};
\draw [fill=white] (13,-3) circle [radius=0.3];
\node [above right] at (13,-3) {$d$};
\draw [fill=white] (11,-1) circle [radius=0.3];
\node [above] at (11,-0.7) {$e$};

\draw [fill=white] (5,-6) circle [radius=0.3];
\node [above left] at (5,-6) {$a$};
\draw [fill=white] (8,-6) circle [radius=0.3];
\node [above left] at (8,-6) {$b$};
\draw [fill=white] (5,-9) circle [radius=0.3];
\node [above left] at (5,-9) {$c$};
\draw [fill=white] (8,-9) circle [radius=0.3];
\node [above left] at (8,-9) {$d$};
\draw [fill=white] (2,-9) circle [radius=0.3];
\node [above left] at (2,-9) {$e$};
\draw [fill=white] (11,-6) circle [radius=0.3];
\node [above left] at (11,-6) {$e$};

\node at (0,1.5) {$G_{P}$:};
\node at (6,1.5) {Insert $e$};
\node at (4,-1.5) {$+$};
\node at (9,-1.5) {$\rightarrow$};
\node [right] at (9,2) {$G_{P}$};
\node [right] at (9,1.2) {with conflict:};
\node at (6.5,-4) {$\downarrow$};
\node at (2.5,-5) {$G_{P}$ without conflict:};

\end{tikzpicture}
\caption{An Example of Resolving Conflicts in $G_{P}$}
\label{fig:6:split}
\end{figure}
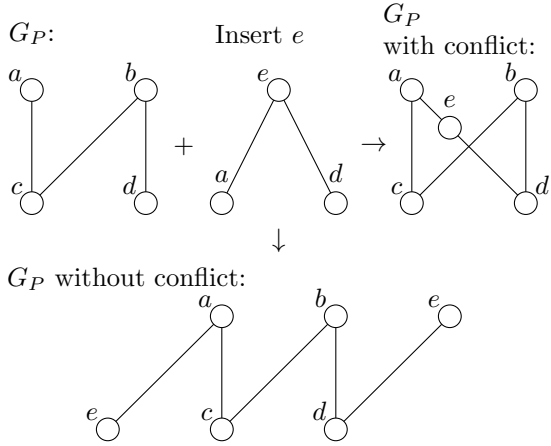

The reason behind this conflict when inserting a vertex is straightforward. When a vertex $u$ has multiple connections in the network, according to Rule~\ref{itm:6:notsamebranch}, Rule~\ref{itm:6:samebranch}, and Claim~\ref{clm:6:trans}, the topology of the graph-based path may not satisfy all the transitive relation relevant to $u$ at the same time. To address this problem, we propose Algorithm~\ref{alg:6:split} to eliminate such a conflict. Generally, the main idea of Algorithm~\ref{alg:6:split} is splitting the vertices, which break the transitive relation in branches in the present $G_{P}$ when $u$ comes, to several pieces according to its degree and connectivity in the network. As a result, each piece takes less connections from the original vertex. This way, it is easier to satisfy the connectivity of the conflict vertex. In Algorithm~\ref{alg:6:split}, when splitting a vertex $v$, we have $v.connections = \bigcap_{i} v_{i}.connections$, where $i \leq v.degree$. The lower half of Figure~\ref{fig:6:split} illustrates that after splitting the vertex $e$, we resolve the conflict by splitting vertex $e$. The new $G_{P}$ satisfies all our rules for a graph-based path.

\begin{algorithm}[!ht]
\footnotesize
\caption{Resolve Inserting Conflicts.}
\label{alg:6:split}
\SetKwInOut{Input}{Input}
\SetKwInOut{Output}{Output}
\Input{
  A stack $S$;\\
  A vertices relationship matrix $\boldsymbol{R}$;\\
  A vertex $v_{0}$.
}
\Output{
  A stack $S$;\\
  A vertices relationship matrix $\boldsymbol{R}$.
}
\BlankLine
$S_{conflict}$ $\leftarrow$ $\emptyset$\;
$layer.top \leftarrow \min_{\forall u \in S}\{u.layer\}$\;
$layer.bottom \leftarrow \max_{\forall u \in S}\{u.layer\}$\;
\For{each $u \in S$}{
  $S_{conflict}.push($the conflicting $u$ for $v)$\;
  $S.remove(u)$\;
}
\If{$S_{conflict} = \emptyset$}{
  $S_{conflict}.push(v_{0})$\;
}
\For{each $u \in S_{conflict}$}{
  $S_{temp}$ $\leftarrow$ $\emptyset$\;
  \For{$l \in [layer.top - 1, layer.bottom + 1]$}{
    $S_{l}$ $\leftarrow$ Insert the suitable $v \in S$ if $u.layer = l$\;
    $S_{temp}.push(S_{l})$\;
  }
  $S_{temp}$ $\leftarrow$ Solve a set cover for $u$ and $S_{temp}$\;
  Split $u$ according to $S_{temp}$\;
  $(S, \boldsymbol{R})$ $\leftarrow$ Insert the split $u$s to $S$ and $\boldsymbol{R}$\;
}
\If{$v_{0} \notin S$}{
  $S$ $\leftarrow$ Insert $v$ to $S$ as Alg.~\ref{alg:6:insert}\;
}
\Return: $S$ and $\boldsymbol{R}$\;
\end{algorithm}

Note that, the splitting operation in Algorithm~\ref{alg:6:split} is different to the differentially private vertices pre-processing in Algorithm~\ref{alg:6:dpv}. Because in Algorithm~\ref{alg:6:dpv}, the sub-vertices are duplicates of the base vertex but have different connections according to the positions of the duplicates, i.e., some of the sub-vertices are fake ones; while in Algorithm~\ref{alg:6:split}, after splitting the base vertex, each piece of the base vertex inherits partial connections from its base vertex, i.e., those pieces of vertices are not fake ones. Theorem~\ref{thm:6:split} proves that this splitting operation ensures us always having a graph-based path from any input network and path.

\begin{theorem}
\label{thm:6:split}
Algorithm~\ref{alg:6:split} guarantees to produce a graph which satisfies our rules for the graph-based path.
\end{theorem}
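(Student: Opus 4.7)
The plan is to show two things in sequence: (i) the split pieces produced by Algorithm~\ref{alg:6:split} never re-introduce a transitive-relation violation of the kind identified in Claim~\ref{clm:6:trans}, and (ii) the splitting process terminates with a $G_P$ satisfying Rules~\ref{itm:6:notsamebranch}--\ref{itm:6:root} of Section~\ref{sec:6:overview}. I would organise the argument around the degree of each conflict vertex, using the observation that Algorithm~\ref{alg:6:split} performs a set-cover step whose worst-case output assigns one split piece per incident connection of the base vertex.

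First I would characterise precisely when Algorithm~\ref{alg:6:insert} fails on a candidate vertex $v_0$: by Claim~\ref{clm:6:trans}, failure means there exist vertices $u,x\in S$ and possibly intermediate vertices $y$ already placed in $G_P$ such that the joint requirements $r_{v_0,u}=r_{v_0,x}=1$ and the transitivity imposed by $y$ cannot all be met at any single layer. Hence the \emph{cause} of the conflict is that a single copy of $v_0$ (or of some already-inserted $u\in S$) is trying to serve too many edges from $E_{G}$ and $E_{H}$ simultaneously. This is the setup Algorithm~\ref{alg:6:split} targets: it collects every such conflicting $u\in S$ into $S_{\mathit{conflict}}$ (defaulting to $\{v_0\}$ if the caller itself is the offender), and for each it enumerates per-layer subsets $S_l$ of compatible neighbours, then solves a set cover to partition $u$'s incident connections across split copies so that every connection lands in a layer compatible with its type ($r=1$ vs.\ $r=2$).

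The next step is to verify that the split satisfies the three relevant invariants. (a) \emph{Rule~\ref{itm:6:notsamebranch}/Rule~\ref{itm:6:samebranch} preservation}: each split copy $u_i$ inherits only the connections from $u$ that are assigned to it, and by the set-cover construction every inherited edge is placed consistently with its relational entry in $\boldsymbol{R}$; the union over $i$ still covers all original incidences of $u$, so no required edge is lost. (b) \emph{Consistency with Claim~\ref{clm:6:trans}}: within any single branch containing a split copy $u_i$, only the sub-set of $u$'s connections carried by $u_i$ is active, strictly reducing the transitive demand relative to the unsplit $u$. (c) \emph{Distinction from Algorithm~\ref{alg:6:dpv}}: the split copies carry real (partial) adjacencies rather than duplicated ones, so path participants with the full topology can still merge them back, which is what Rule-based reconstruction in Section~\ref{sec:6:overview} needs.

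The hard part will be proving termination of the outer loop in Algorithm~\ref{alg:6:insert} after calling Algorithm~\ref{alg:6:split}, since in principle each split could create new conflicts. I would handle this by a degree-based potential argument: assign to the current stack $S$ the potential $\Phi(S)=\sum_{w\in S}\deg_{\hat N}(w)$, and observe that (i) Algorithm~\ref{alg:6:split} never increases $\Phi$ because each split copy of $u$ receives strictly fewer connections than $u$ had, while no new connections are fabricated, and (ii) the limiting case of a maximally split $u$ assigns exactly one connection per copy, in which case every copy trivially admits a valid layer and the transitive requirement reduces to a single pair. Hence in at most $\sum_{w}\deg_{\hat N}(w)$ invocations the splitting stabilises and the subsequent insertion of $v_0$ (Line 18 of Algorithm~\ref{alg:6:split}) succeeds, delivering a $G_P$ that meets all the graph-based-path rules. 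Rule~\ref{itm:6:root} is then enforced by Lines 4--5 of Algorithm~\ref{alg:6:main}, independent of the splitting, so the theorem follows.
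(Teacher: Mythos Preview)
Your proposal is correct and rests on the same key observation as the paper: in the worst case one can split every conflict vertex down to one incident connection per copy, which yields a forest of single $G$-edges that trivially satisfies all the graph-based-path rules. The paper's proof simply states this worst case directly (noting that splitting creates no new edges and hence preserves $G$'s connectivities), whereas you wrap the same idea in a potential-function termination argument and explicit invariant checks; the added rigour is fine but not required for the paper's level of detail.
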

\begin{proof}
When we split a vertex, we do not create new edges, i.e., new connections between vertices, so this splitting strategy keeps all the connectivities in $G$ (show in Figure~\ref{fig:6:example}). Furthermore, by splitting the conflict vertex $v$, we avoid satisfying $v$'s connectivity all together by one vertex but multiple new vertices separately. Especially, in the worst case, when inserting a vertex $v$ for all vertices, we can always split it according to its degree in $G$. This way, we will finally have a forest $G_{P}$ where each single edge in $G$ served as a tree. Therefore, such graph follows all our rules for building our graph-based path for recovering the original path.
\end{proof}

Finally, we study the depth of $G_{P}$ in Proposition~\ref{prop:6:height} which assists the proof of Corollary~\ref{cor:6:recover} for the time complexity for each path participate recovering its status in the path.
\begin{proposition}
\label{prop:6:height}
The depth of the graph-based path, which is the length of the longest branch, is at most $|V_{G}|/2$.
\end{proposition}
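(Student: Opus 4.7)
The plan is to exploit two constraints that every branch of $G_P$ must satisfy simultaneously: (i) by Rule~\ref{itm:6:samebranch} together with Claim~\ref{clm:6:trans}, the vertices lying on a single branch form a clique in $G$ (i.e., every pair of them is a $G$-edge); and (ii) by Rule~\ref{itm:6:notsamebranch}, no two path-consecutive vertices of $\hat{P}$ can coexist on the same branch, since each such pair is joined by an edge of $H=\hat{P}$, which forbids them from sharing a branch. The interesting constraint here is (ii): it says that the vertex set of any branch is an \emph{independent set} in the path graph $\hat{P}$ when $\hat{P}$ is viewed as a graph on the vertex set $V_G$.

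From here the argument reduces to a standard combinatorial bound. Because (after the pre-processing of Algorithm~\ref{alg:6:dpv}) $\hat{P}$ is an acyclic path whose vertex set is exactly $V_G$, it forms a path graph on $|V_G|$ nodes. The maximum independent set in a path graph on $n$ vertices has size $\lceil n/2 \rceil$, attained by taking every other vertex. Hence the number of vertices lying on any single branch of $G_P$ is at most $\lceil |V_G|/2 \rceil$.

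Finally I would translate the vertex bound into a depth bound. A branch with $k$ vertices occupies $k$ distinct layers and therefore has length (number of edges) equal to $k-1$. Combining with the previous step, the length of the longest branch is at most $\lceil |V_G|/2 \rceil - 1 \le |V_G|/2$, which is the required inequality.

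The step I expect to need the most care is (i): the claim that a branch's vertices are pairwise $G$-adjacent relies on Claim~\ref{clm:6:trans}, and one must verify that Algorithm~\ref{alg:6:split} does not create a branch violating this transitivity (it does not, because splitting only removes edges from a vertex's connection list, never creating new same-branch pairs). Once this is in place, the rest is a clean reduction to the independent-set bound on a path.
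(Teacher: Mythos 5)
Your proof is correct and follows essentially the same route as the paper: the paper also argues that a branch of $G_{P}$ corresponds to a clique in $G$ and bounds its size by roughly $|V_{G}|/2$ because $G$ lacks the path edges. Your version is actually tighter than the paper's, since you supply the justification the paper only asserts — a branch-clique must be an independent set in the Hamiltonian path $\hat{P}$, hence of size at most $\lceil |V_{G}|/2\rceil$ — and you resolve the rounding issue by measuring depth in edges, giving $\lceil |V_{G}|/2\rceil - 1 \le |V_{G}|/2$.
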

\begin{proof}
According to the way we build the graph-based path, the longest branch comes from the maximum-sized complete sub-graph in $G$. Because we construct $G$ by removing the edges of path from (and adding the non-existent edges into) the original network, in the worst case, there are at most $|V_{G}|/2$ vertices could form such a complete sub-graph in $G$. Therefore, this proposition holds.
\end{proof}

\begin{corollary}
\label{cor:6:recover}
It takes $\mathcal{O}(|V_{G}|)$ time complexity for a path participate to recover its status in the path from our graph-based path ($G_{P}$ in Figure~\ref{fig:6:example}).
\end{corollary}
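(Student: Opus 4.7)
The plan is to build the recovery procedure explicitly and then read off its running time from the layered structure of $G_P$ established earlier. A path participate $p$ (a vertex, or an edge viewed through its two endpoints) possesses full knowledge of the original network $N$ and path $P$, and needs only to locate its own index (or indices, if duplicated) in $P$ using the published $G_P$. Since $G_P$ is stratified into layers with layer index corresponding to visiting order, the recovery task reduces to a directed walk from the root layer down to the layer containing $p$.

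First, I would determine which of the two terminal groups of $G_P$ is the root. By Rule~\ref{itm:6:root}, this is the group containing (or closest to) $v_{1}$, which $p$ knows; hence the orientation is fixed in $\mathcal{O}(1)$ time. Next, I would walk down $G_P$ one layer at a time: at the current vertex $u$ in layer $i$, the successor in layer $i{+}1$ is the unique (or small, constant-sized) candidate for which the edge $(u, \cdot)$ lies in $E_{P}$, where membership in $E_{P}$ is verifiable by $p$ from its knowledge of $P$. Rule~\ref{itm:6:notsamebranch} guarantees that consecutive path vertices appear in different branches, while Rule~\ref{itm:6:samebranch} rules out false positives among co-branch neighbors, so each step costs only $\mathcal{O}(1)$ to resolve.

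The walk terminates when $p$ (or, in the edge case, both endpoints) is located. By Proposition~\ref{prop:6:height}, the depth of $G_P$ is at most $|V_{G}|/2$, so the walk involves at most $|V_{G}|/2$ layer transitions; combined with constant work per layer, the total cost is $\mathcal{O}(|V_{G}|)$, as claimed. The contrast with the $\mathcal{O}(|V|+|E|)$ bound for recovering from $G$ is exactly that working with $G_{P}$ replaces an edge-scanning traversal with a layer-indexed descent.

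The main subtlety I expect is accounting for the duplicate sub-vertices introduced by Algorithm~\ref{alg:6:dpv} and the splits introduced by Algorithm~\ref{alg:6:split}: a base vertex can appear several times in $G_P$ at different layers, and a participate must identify all of its occurrences. Because the base-to-sub-vertex correspondence is publicly available and each sub-vertex occupies a distinct layer, these occurrences are encountered during the single top-down descent without extra asymptotic cost, so the $\mathcal{O}(|V_{G}|)$ bound still absorbs them.
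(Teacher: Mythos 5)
Your proposal rests on a structural premise that $G_{P}$ does not satisfy: that the layer index of $G_{P}$ encodes the visiting order of $P$, so that the path can be read off by a top-down, layer-by-layer descent following path edges. In fact the branches of $G_{P}$ encode the \emph{non-path} edges (Rule~\ref{itm:6:samebranch}): two vertices in a common branch are joined by an edge of $E_{G}$, while a path edge is represented precisely by its endpoints \emph{not} sharing a branch (Rule~\ref{itm:6:notsamebranch}). Consecutive path vertices therefore need not lie in consecutive layers --- by Lemma~\ref{lem:6:roots} two of them are both roots, i.e.\ in the same layer --- so the ``successor in layer $i{+}1$'' step of your walk is not well defined. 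There is also a circularity in the step where you verify membership in $E_{P}$ ``from $p$'s knowledge of $P$'': the participate knows $N=(V,E)$, not $P$; path membership of an existing edge must be deduced from the same-branch test itself, which is exactly what ``recovering its status'' means. Finally, the claim of $\mathcal{O}(1)$ work per layer is unsupported, since a layer may contain $\Theta(|V_{G}|)$ vertices.

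The paper's argument is both more local and simpler: the corollary concerns a single participate determining which of its incident edges lie on the path, not reconstructing the whole ordered path. For that, the participate only scans the vertices sharing a branch with it (same branch $\Rightarrow$ edge of $E_{G}$, different branch plus existence in $E$ $\Rightarrow$ edge of $E_{P}$), and Proposition~\ref{prop:6:height} bounds the length of any branch by $|V_{G}|/2$, so the scan costs $\mathcal{O}(|V_{G}|)$, versus $\mathcal{O}(|V_{G}|+|E_{G}|)$ when working from $G$. You do invoke Proposition~\ref{prop:6:height}, but to bound the number of layers in a global descent rather than to bound the size of the participate's local neighbourhood in $G_{P}$, which is the step that actually yields the stated complexity. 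To repair your write-up you would need to drop the ordered descent, state the branch-membership criterion correctly, and charge the work to the participate's own branch.
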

\begin{proof}
For path participate, to recover its status from $G$, it takes $\mathcal{O}(|V_{G}|+|E_{G}|)$ time complexity to traverse all the edges connected with its one end in $G$. In the worst case, $\mathcal{O}(|E_{G}|)$ is equivalent to $\mathcal{O}(|V_{G}|^{2})$. However, according to Proposition~\ref{prop:6:height}, each vertex in $G_{P}$ has at most $\mathcal{O}{|V_{G}|}$ edges. Then, for path participate, it takes $\mathcal{O}(|V_{G}| + |V_{G}|)$ to search all the edges connected with its one end in $G_{P}$. Therefore, $\mathcal{O}(|V_{G}|)$ time complexity to confirm whether an edge is in the path or not.
\end{proof}

\section{Experimental Evaluation}
\label{sec:6:exp}
In this section, we show our experimental evaluation against our proposed algorithm, including the dataset we used, experimental configurations, and the experimental results.

\subsection{Dataset and Configurations}
According to the inputs of our algorithm, we need a dataset contains a map where all vertices form the path. To evaluate the performance of our algorithm over different sizes of graphs, in this paper, similar to \cite{FerreiraN2013,AlahiA2016}, we generate and use a synthetic dataset: undirected map with different sizes and densities, which contains the path (either acyclic or cyclic) for publishing. We produce the undirected map with a given number of vertices $|V|$ and a given number of edges $|E|$ in three steps. First, we sample all the vertices from a two-dimensional plane uniformly. Second, we draw a path for publishing to connect all the vertices in the map. Third, all the edges not in the path but in the map are randomly generated with uniform distribution.

Because our differentially private algorithm is randomised, according to the law of large numbers~\cite{WIKI2019}, we report the expected algorithm performance of each aspects by calculating the average results of running the algorithm 1000 times. To simulate the real-world scenario, the size of a map/path in our experiments is no more than 10 vertices. Accordingly, the number of edges of the synthetic map is in the range from $|V| - 1$ (a map only has a Hamilton path for publishing) to $\frac{1}{2}|V|(|V| - 1)$ (a complete map). So, in our experiments, we have overall 122 different sizes of maps as inputs.

The experiments are programmed in Python 3.7, plotted with MATLAB R2019b, on a node, with 16 cores and 64GB memory, of a super computer (the Phoenix HPC service at the University of Adelaide). 

\subsection{Experimental Results}
In this section, we evaluate the performance of our algorithm from the following aspects: run time of our algorithm with differential privacy on either vertices or edges; output quality of the random maps without both differential privacy and the splitting strategy for the graph-based path; quality of the differentially private output. Since our graph-based path can definitely confirm whether an edge belongs to the original path, in this section, we evaluate the output (graph-based path) quality by testing whether the output could confirm the visiting order of the original path as showed in Section~\ref{sec:6:overview}. In the figures in this section, each point on a curve stands for a specific map with given number of vertices and edges. 

\subsubsection{Run Time}
\begin{figure*}[!ht]
	\subfloat[DP on vertex ($\epsilon_{v} = 0.5$)]{%
		\includegraphics[width=0.24\textwidth]{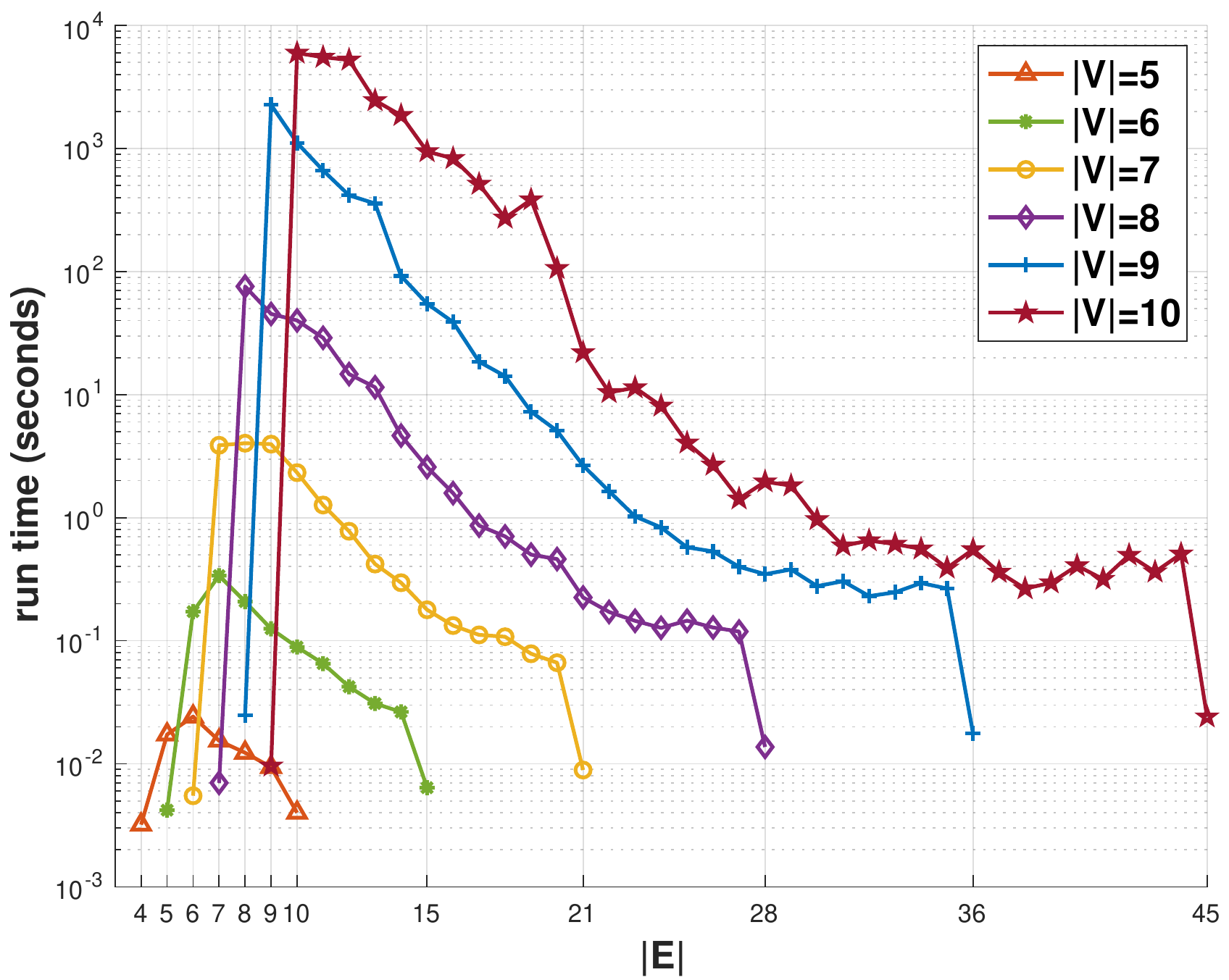}
		\label{subfig:6:run_time_v0}
	}
	\hfill
	\subfloat[DP on vertex ($\epsilon_{v} = 1$)]{%
		\includegraphics[width=0.24\textwidth]{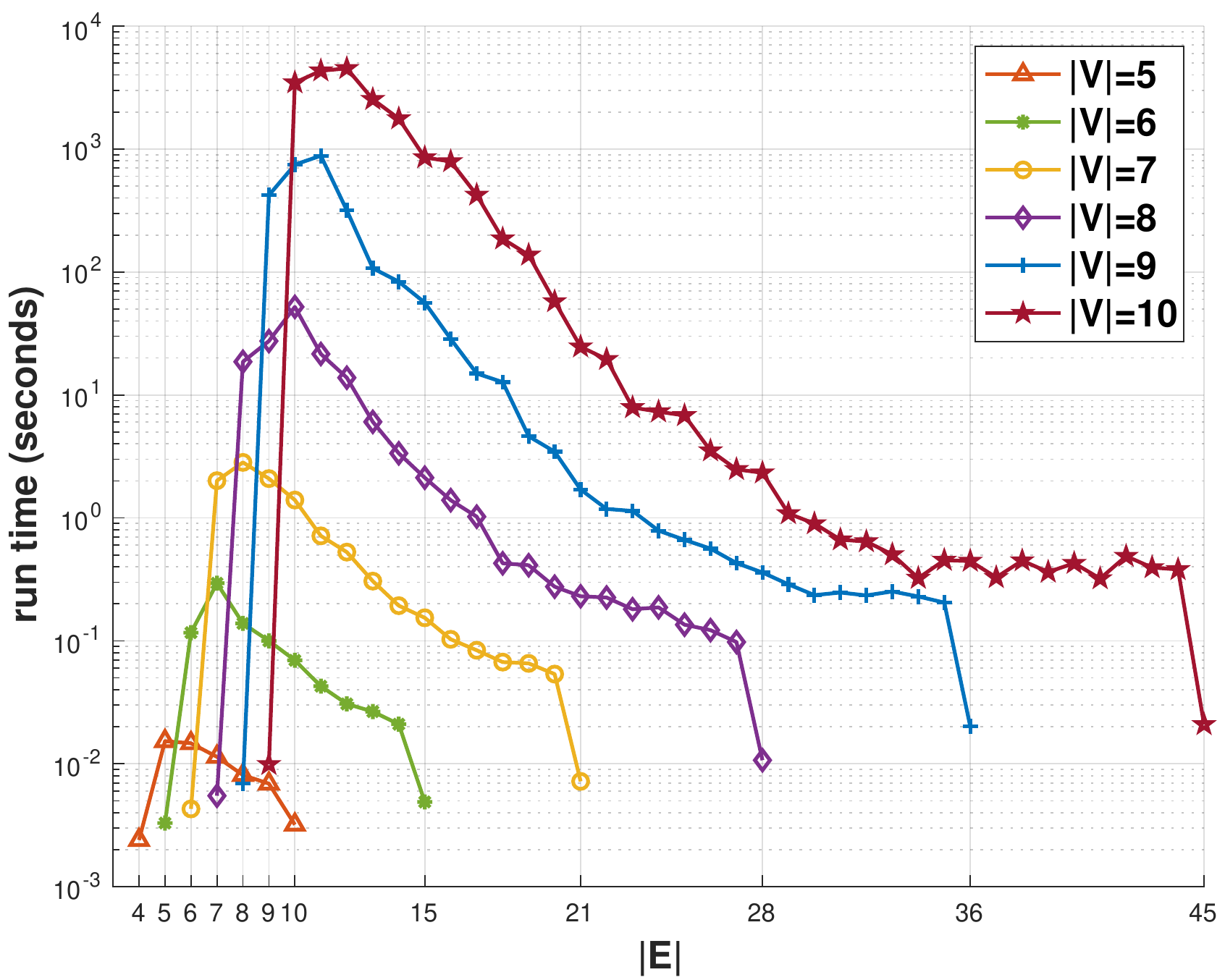}
		\label{subfig:6:run_time_v1}
	}
	\hfill
	\subfloat[DP on edge ($\epsilon_{e} = 0.5$)]{%
		\includegraphics[width=0.24\textwidth]{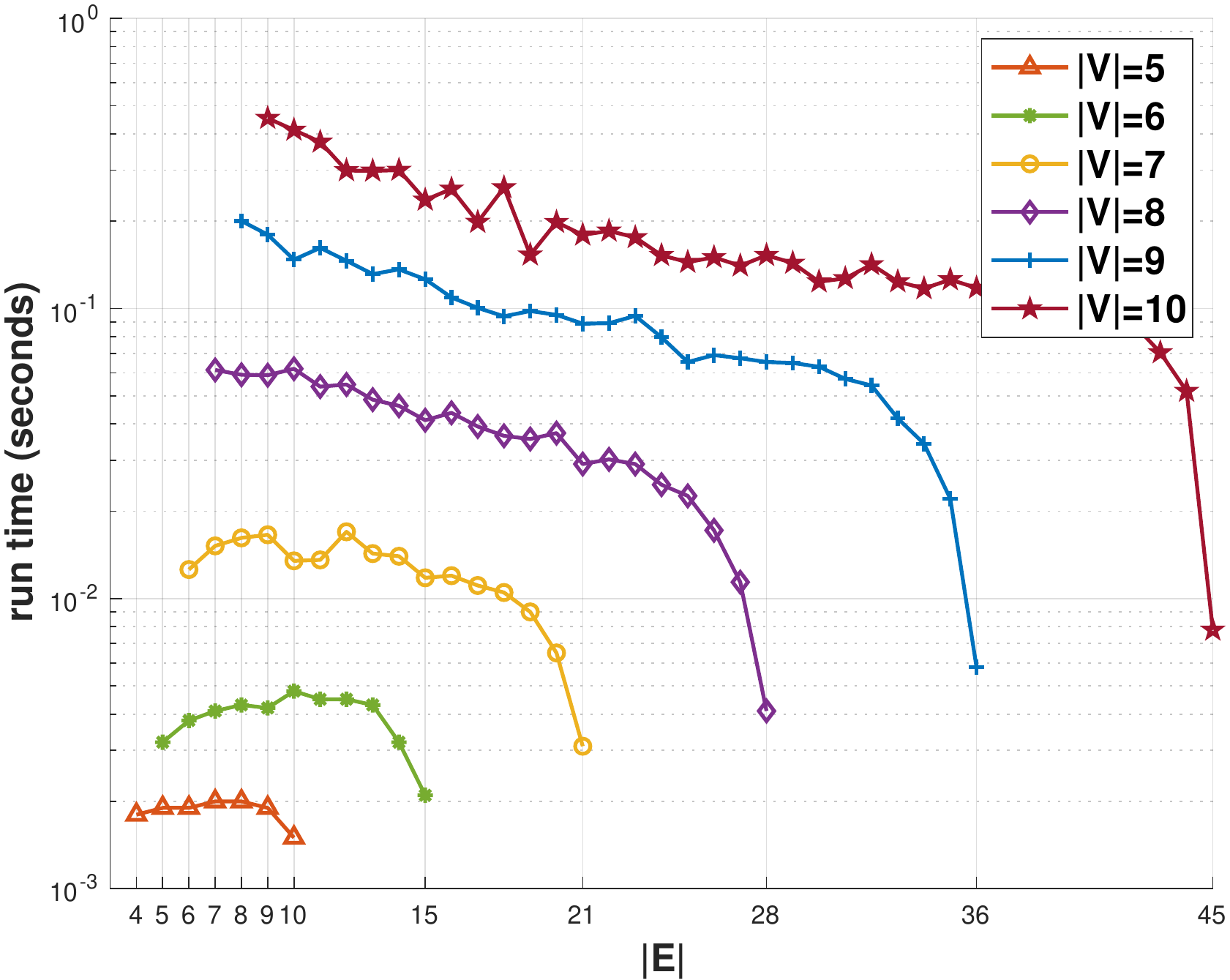}
		\label{subfig:6:run_time_e0}
	}
	\hfill
	\subfloat[DP on edge ($\epsilon_{e} = 1$)]{%
		\includegraphics[width=0.24\textwidth]{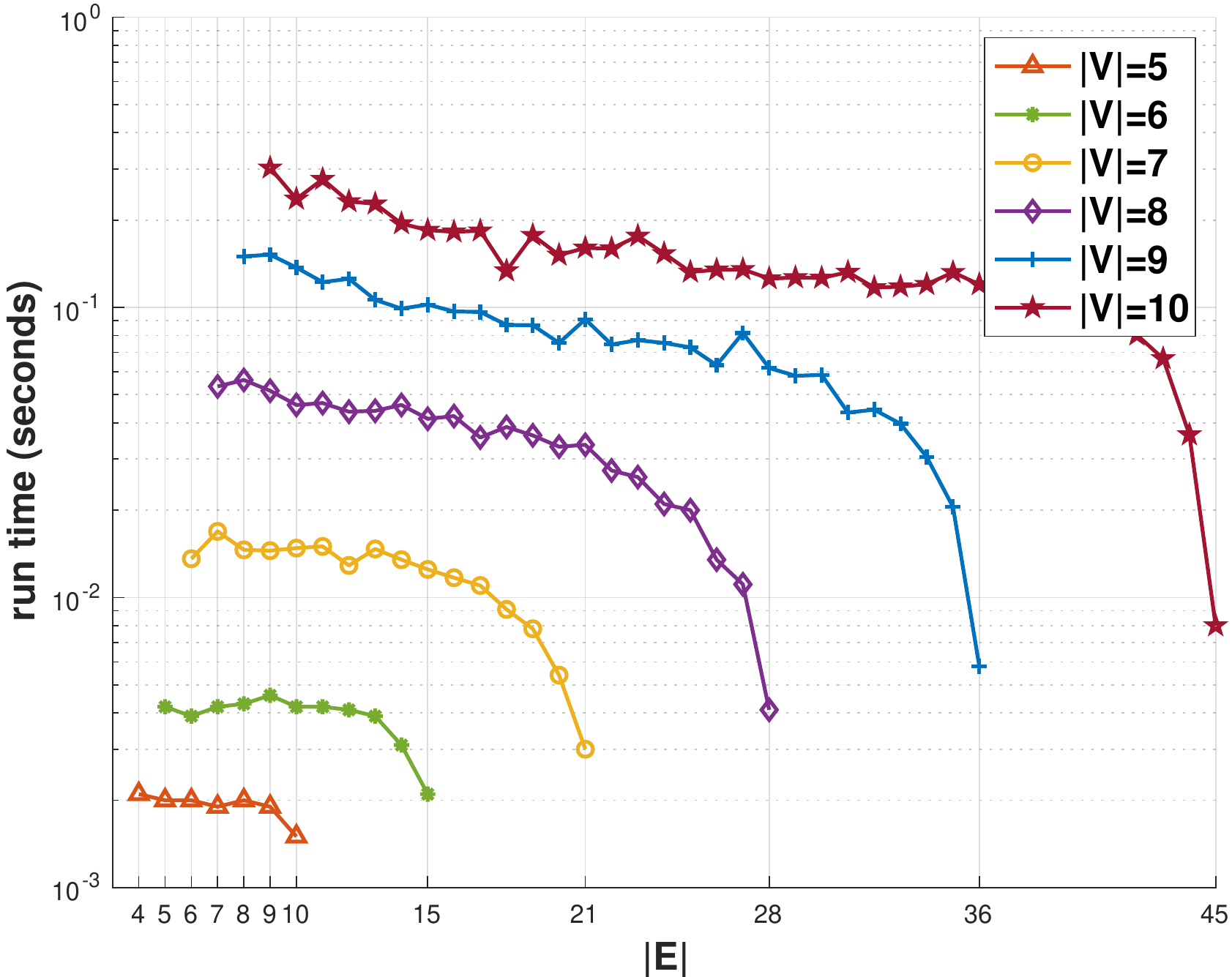}
		\label{subfig:6:run_time_e1}
	}
	\caption{Run Time (Acyclic Path).}
	\label{fig:6:run_time}
\end{figure*}
Figure~\ref{fig:6:run_time} depicts the run time of our algorithm with acyclic paths over different sizes of the maps and different settings of differential privacy on either vertices or edges. It is clear that when increasing the number of vertices, the running time of our algorithm increases significantly. However, from Figure~\ref{subfig:6:run_time_e0} and Figure~\ref{subfig:6:run_time_e1}, the increasing number of edges results in a decreasing run time. The reason for it is that with a given number of vertices and a given $\epsilon_{e}$, when increasing the number of edges, less spaces are available for adding fake edges, then the number of possible topologies of the map is decreasing, so it will take less time for our algorithm to find a suitable graph-based path. 

According to Equation~\eqref{eq:6:qv}, more vertices are injected with a smaller privacy budget on vertices $\epsilon_{v}$, so that the run time in Figure~\ref{subfig:6:run_time_v0} ($\epsilon_{v} = 0.5$) is much higher than the run time in Figure~\ref{subfig:6:run_time_v1} ($\epsilon_{v} = 1$). Furthermore, based on Equation~\eqref{eq:6:qe}, more edges are injected to the map ($G$ in Figure~\ref{fig:6:example}) with a greater privacy budget on edges $\epsilon_{e}$, then the run time in Figure~\ref{subfig:6:run_time_e0} ($\epsilon_{e} = 0.5$) is much higher than the run time in Figure~\ref{subfig:6:run_time_e1} ($\epsilon_{e} = 1$).

\subsubsection{Outputs Quality over Random Maps without DP}
\begin{figure*}[!ht]
	\subfloat[Percentage of Usable Map]{%
		\includegraphics[width=0.32\textwidth]{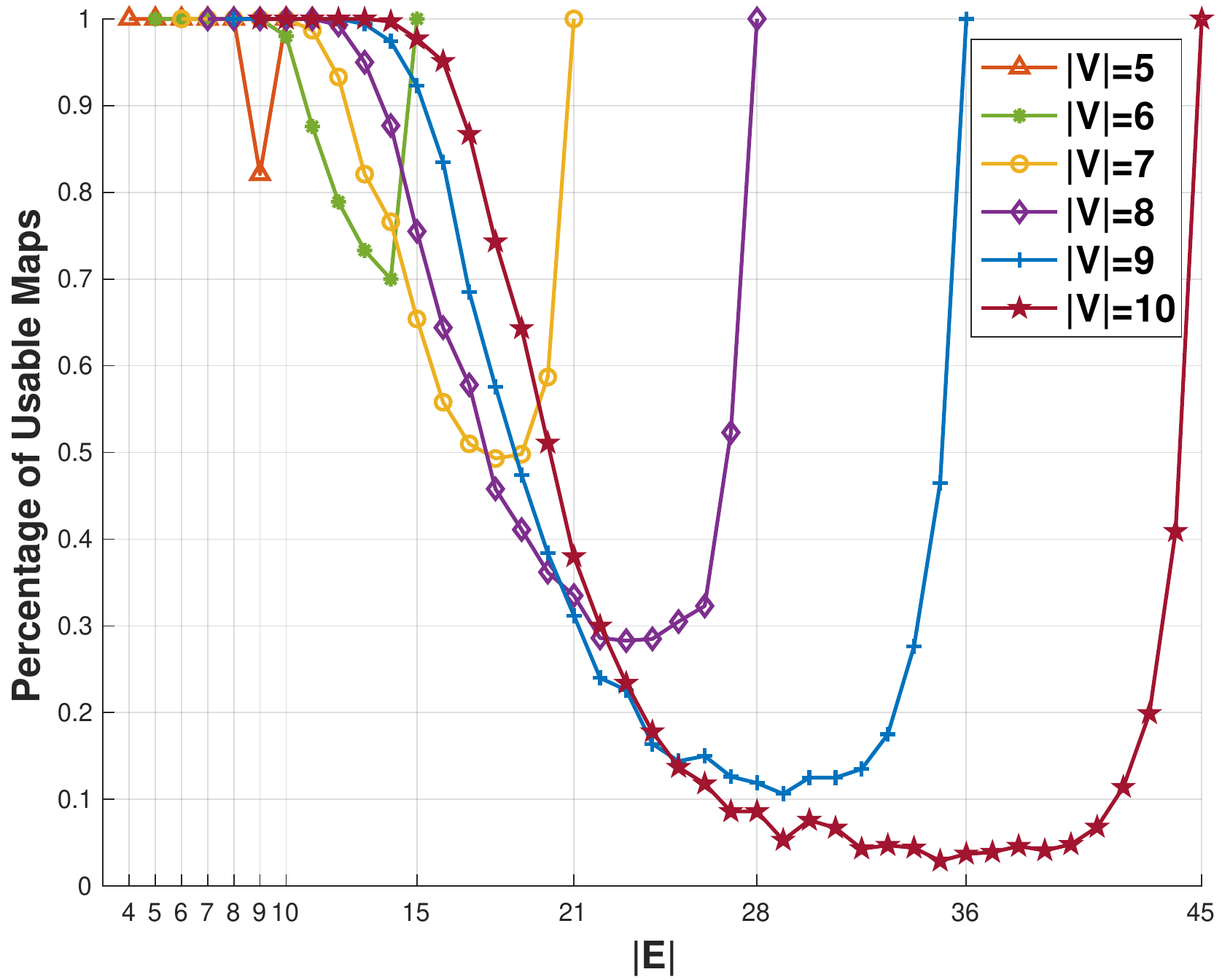}
		\label{subfig:6:good_map_no_dp}
	}
	\hfill
	\subfloat[Percentage of Good Output with Usable Map]{%
		\includegraphics[width=0.32\textwidth]{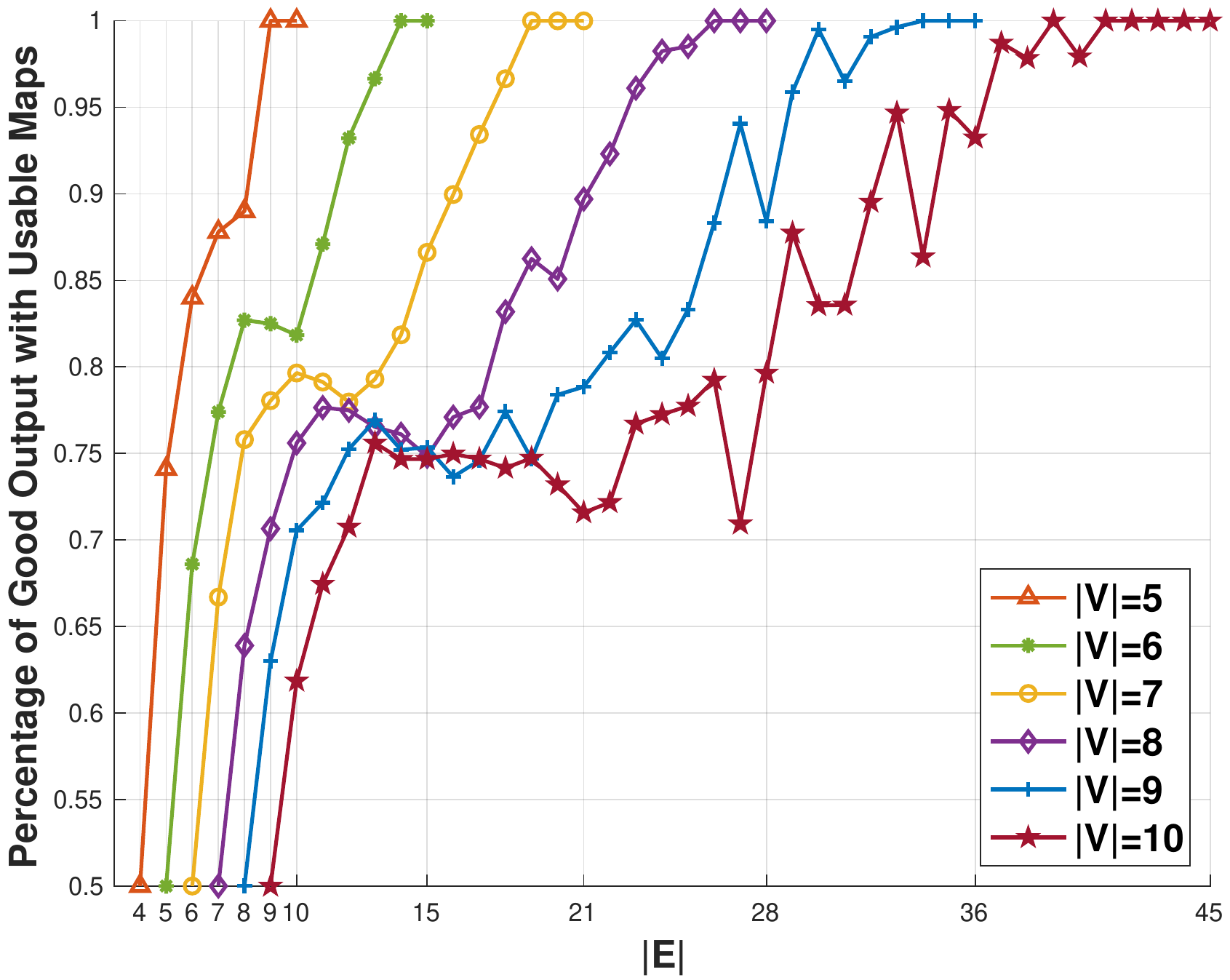}
		\label{subfig:6:good_output_no_dp}
	}
	\hfill
	\subfloat[Percentage of Overall Good Output]{%
		\includegraphics[width=0.32\textwidth]{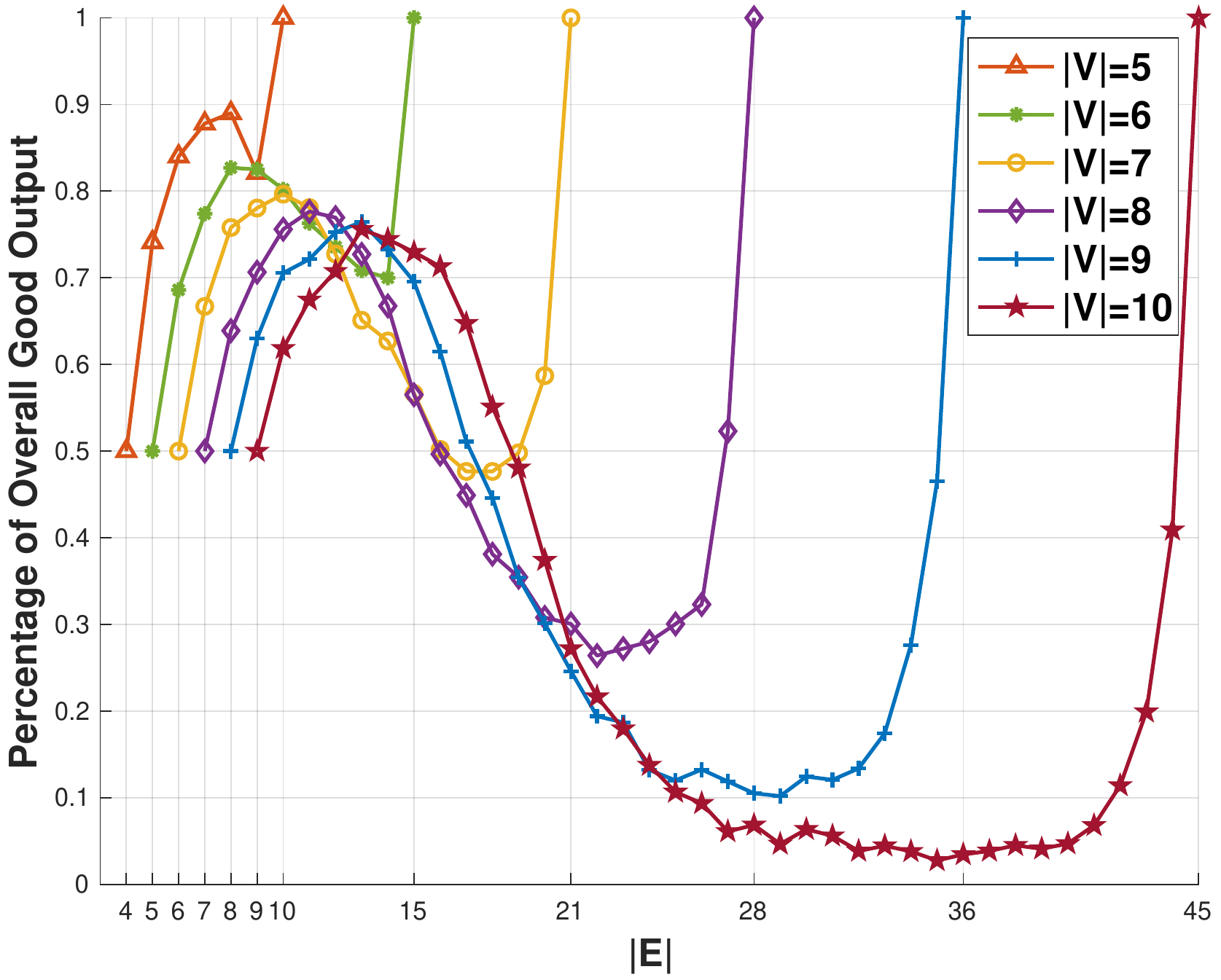}
		\label{subfig:6:overall_good_output_no_dp}
	}
	\caption{Output Quality without both DP and The Splitting Strategy.}
	\label{fig:6:output_no_dp}
\end{figure*}
In Figure~\ref{fig:6:output_no_dp}, we study the performance of our graph-based path without both DP injection and the splitting strategy (as Algorithm~\ref{alg:6:split}) for different sizes of maps with acyclic path. 

Figure~\ref{subfig:6:good_map_no_dp} shows the percentage of usable maps for successfully producing the graph-based path without the splitting strategy. Particularly, about 43\% (52/122) of the maps cannot produce the graph-based path with over 50\% possibilities, especially for most of the maps (paths) with more than 7 vertices. Moreover, only about 30\% (30/122) maps produce the graph-based path with 100\% guarantee, such maps are the complete graph and the sparse graphs $G=(V,E)$, that $|E| \in [|V| - 1, |V| + 3]$.

Figure~\ref{subfig:6:good_output_no_dp} depicts the percentage of good output, which can confirm the correct visiting order of the original path, with the usable maps. According to how we recover the exact path in Section~\ref{sec:6:overview} from the graph-based path, in our experiments we count the good outputs as follows: if we confirm the first node of the original path directly, we count it as a good output; if the two ends of the original path have equal chance to be confirmed as the first node, we add 0.5 to the total number of good outputs. Consequently, we have consistent results from both Figure~\ref{subfig:6:good_output_no_dp} and Proposition~\ref{prop:6:firstvertex}. That is, with at least 50\% probability, our graph-based path confirms the visiting order of the original path, so the exact original path, if we have full information about the vertices and the existence of the edges of a map. In addition, Figure~\ref{subfig:6:good_output_no_dp} also shows that with a given number of vertices, the number of edges and the percentage of good outputs are roughly positively correlated. The reason is that, in average, more edges in the map means more vertices linked to the first node of the path, then recall Line 3 and Line 10 of Algorithm~\ref{alg:6:insert}, we will have higher probability to keep the first node as a root, so to have a good output. The result illustrated in Figure~\ref{subfig:6:good_output_no_dp} will be a base result for analysing the experimental results in the next section.

Finally, Figure~\ref{subfig:6:overall_good_output_no_dp} illustrates the Hadamard product of Figure~\ref{subfig:6:good_map_no_dp} and Figure~\ref{subfig:6:good_output_no_dp}. Namely, the overall good output percentage without the splitting strategy and DP injection.

\subsubsection{Quality of Differentially Private Outputs}
In this part, we report the output quality of our algorithm (with consideration of DP injection and the splitting strategy) on both acyclic paths (in Figure~\ref{fig:6:no_ring_dp}) and cyclic paths (in Figure~\ref{fig:6:ring_dp_v}). With the splitting strategy (Algorithm~\ref{alg:6:split}), we always produce the graph-based path from any map with 100\% guarantee in the experiments, so we do not plot a figure to present it.

\begin{figure*}[!ht]
	\subfloat[DP on Vertices]{%
		\includegraphics[width=0.46\textwidth]{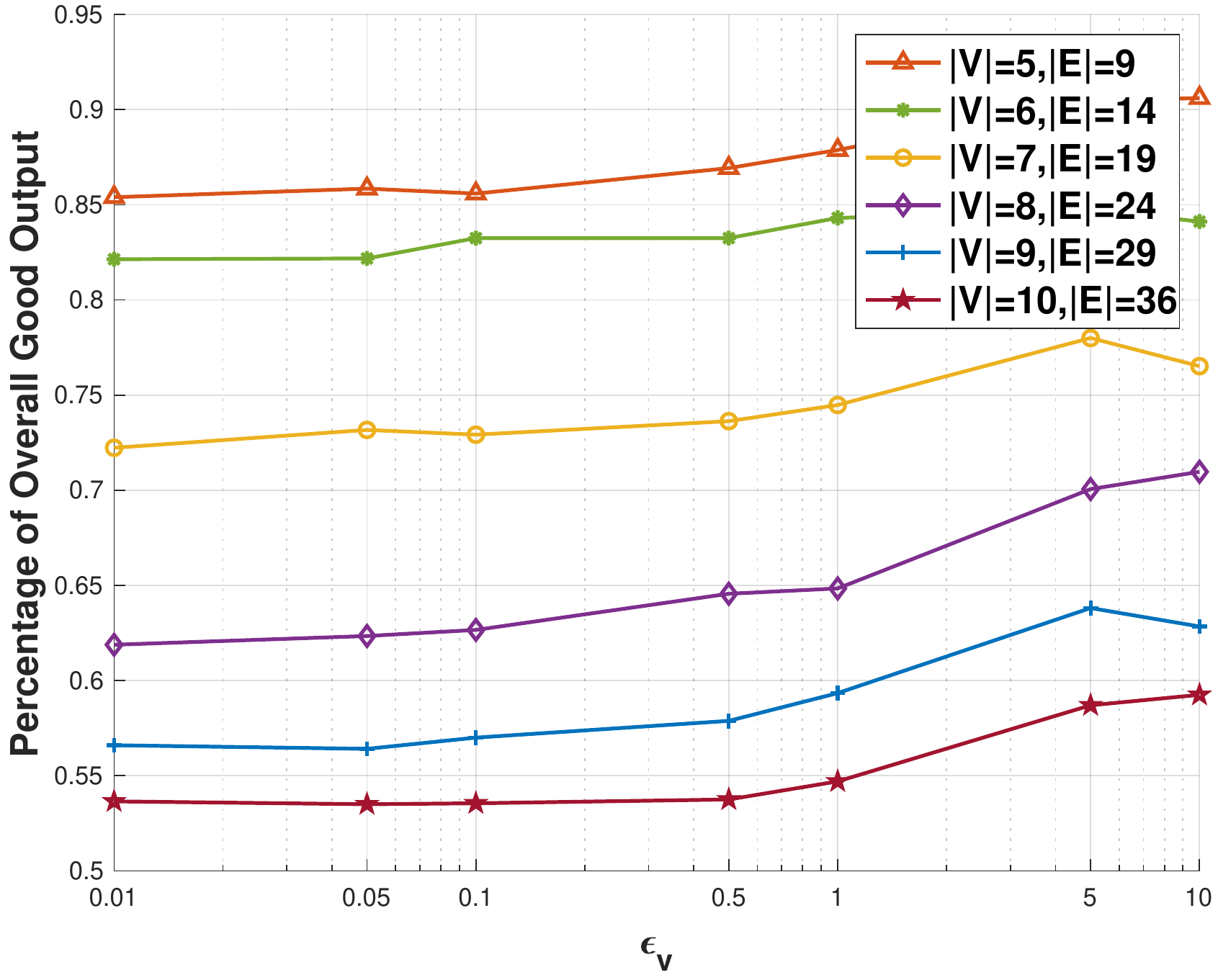}
		\label{subfig:6:no_ring_dp_v}
	}
	\hfill
	\subfloat[DP on Edges]{%
		\includegraphics[width=0.46\textwidth]{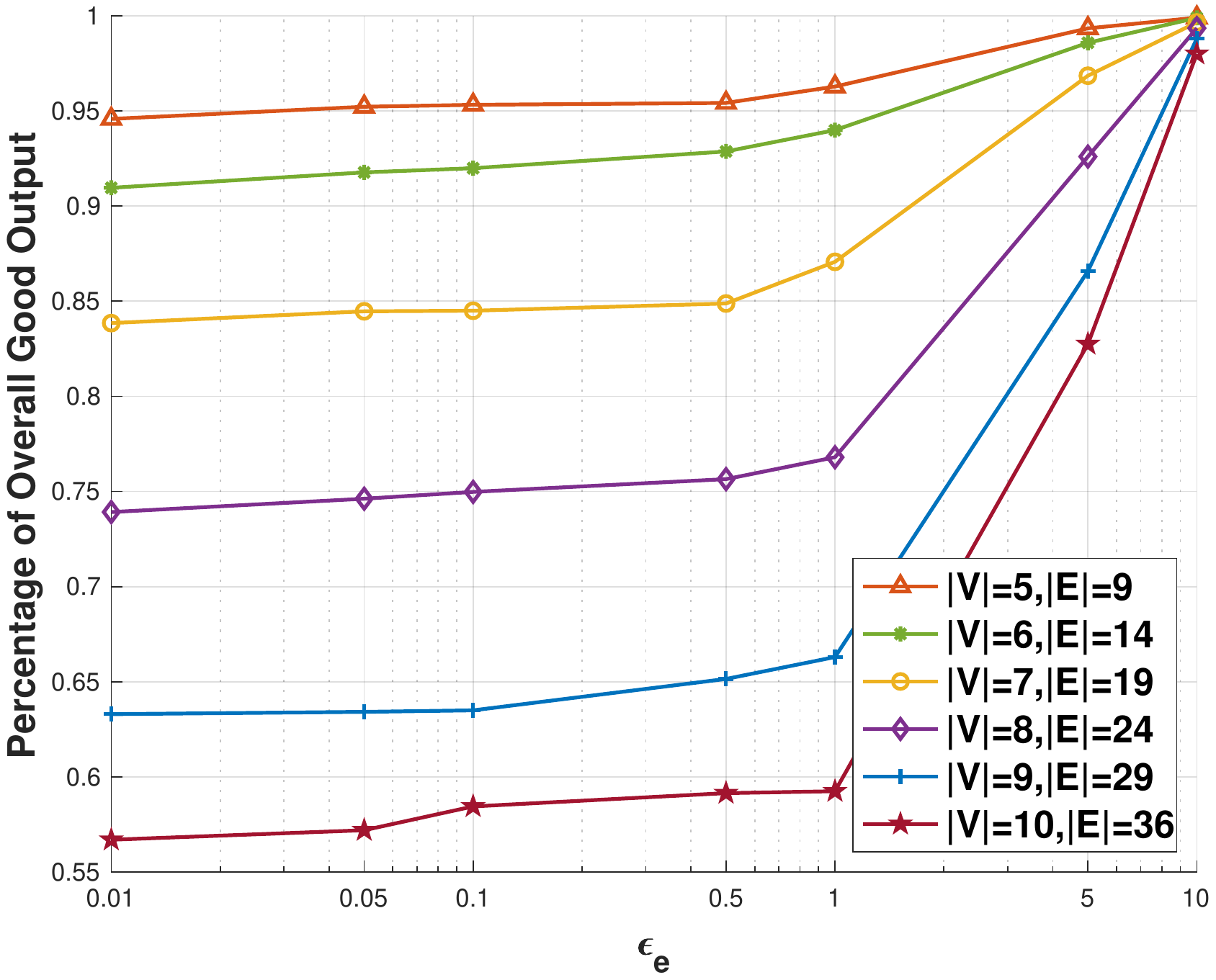}
		\label{subfig:6:no_ring_dp_e}
	}
	\caption{Output Quality with DP on Acyclic Paths.}
	\label{fig:6:no_ring_dp}
\end{figure*}
To evaluate the performance of our algorithm on acyclic paths in the worst case, according to Figure~\ref{subfig:6:good_map_no_dp}, we select the maps produce the graph-based paths with the minimum opportunity for each a given number of vertices in Figure~\ref{fig:6:no_ring_dp}.

In Figure~\ref{subfig:6:no_ring_dp_v}, we only inject DP on vertices to learn the effect of the differentially private vertices pre-processing (Algorithm~\ref{alg:6:dpv}). Based on Equation~\eqref{eq:6:qv} and Figure~\ref{subfig:6:good_output_no_dp}, with an increasing $\epsilon_{v}$, less fake vertices are added, then we will have an increasing percentage of good output, which fits the results in Figure~\ref{subfig:6:no_ring_dp_v}.

Figure~\ref{subfig:6:no_ring_dp_e} shows the performance of differentially private edges pre-process (Algorithm~\ref{alg:6:dpe}). Because a greater $\epsilon_{e}$ introduces more edges to the map ($G$ in Figure~\ref{fig:6:example}), then according to the result of Figure~\ref{subfig:6:good_output_no_dp}, we expect an increasing percentage of good output as what we have in Figure~\ref{subfig:6:no_ring_dp_e}.

Comparing the results between Figure~\ref{subfig:6:no_ring_dp_v} and Figure~\ref{subfig:6:no_ring_dp_e}, because injecting DP on vertices will introduce fake vertices, it decreases the probability of placing the first node of the path as a root, then it impacts the performance of the graph-based path more significantly than DP injection on edges. Therefore, for acyclic path publishing, to protect private edges, only injecting DP on edges would be sufficient for both privacy and utility requirements.

\begin{figure*}[!ht]
	\subfloat[$\epsilon_{e} = 0.5$]{%
		\includegraphics[width=0.32\textwidth]{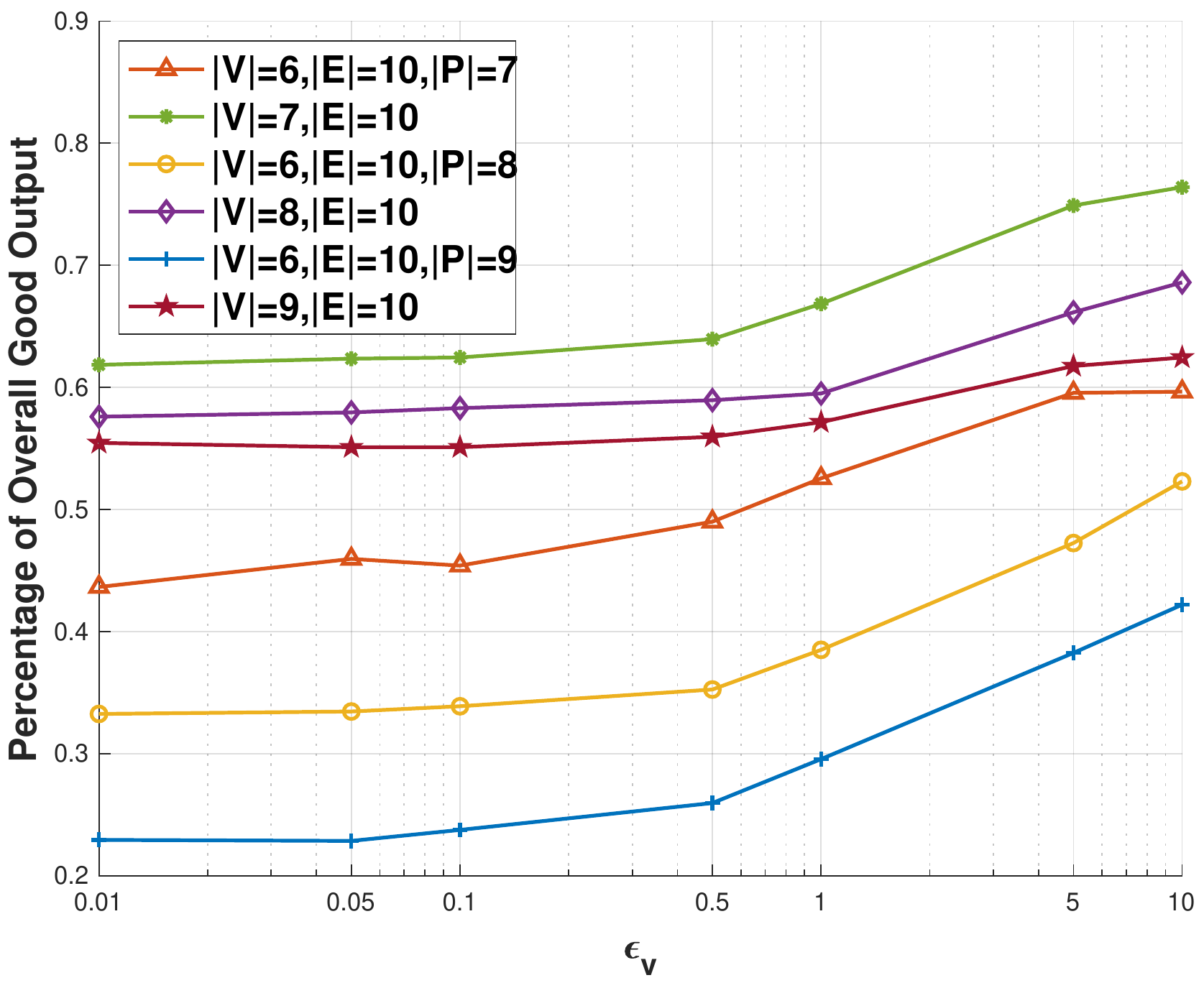}
		\label{subfig:6:ring_dp_ve0}
	}
	\hfill
	\subfloat[$\epsilon_{e} = 1$]{%
		\includegraphics[width=0.32\textwidth]{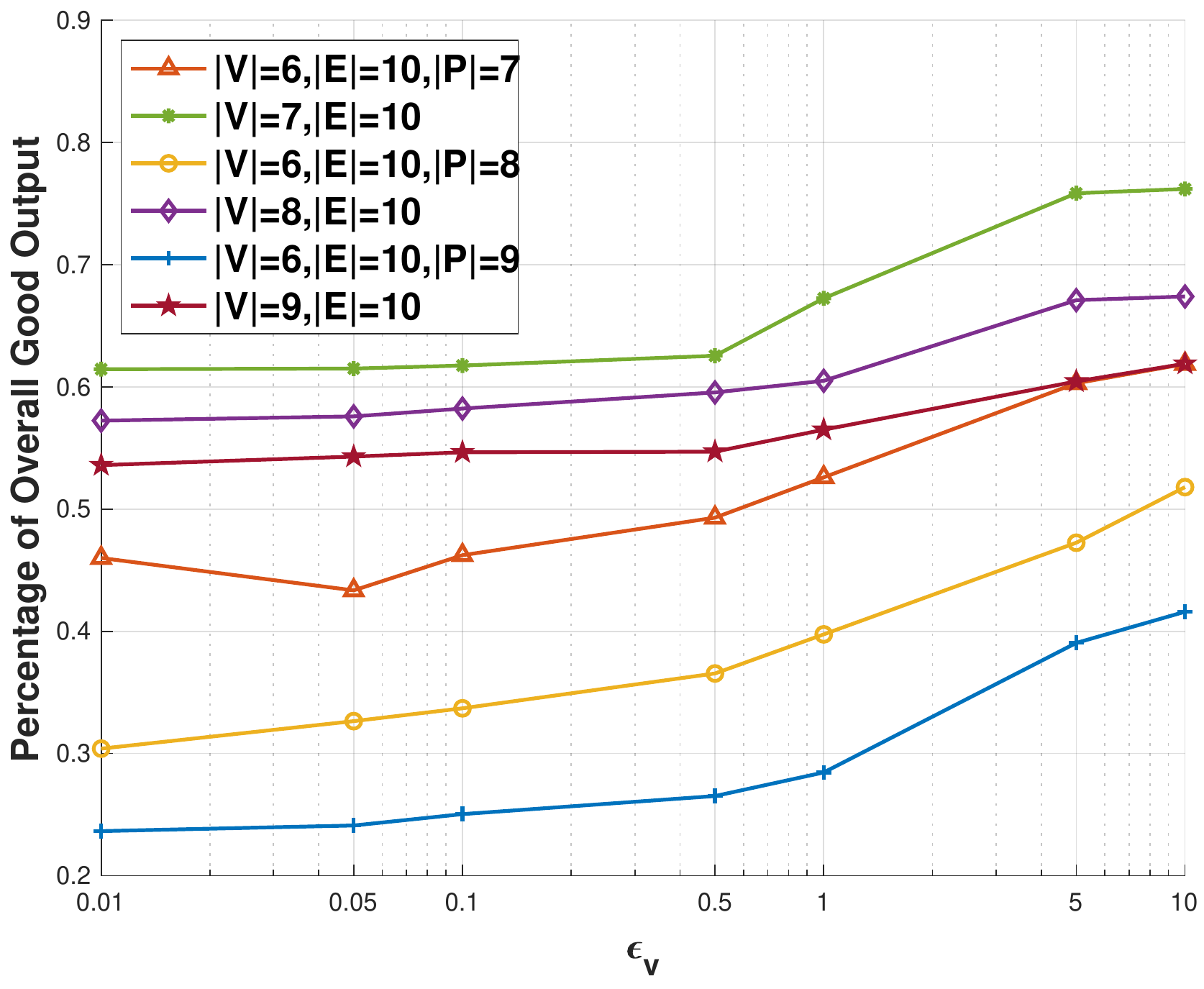}
		\label{subfig:6:ring_dp_ve1}
	}
	\hfill
	\subfloat[$\epsilon_{e} = 5$]{%
		\includegraphics[width=0.32\textwidth]{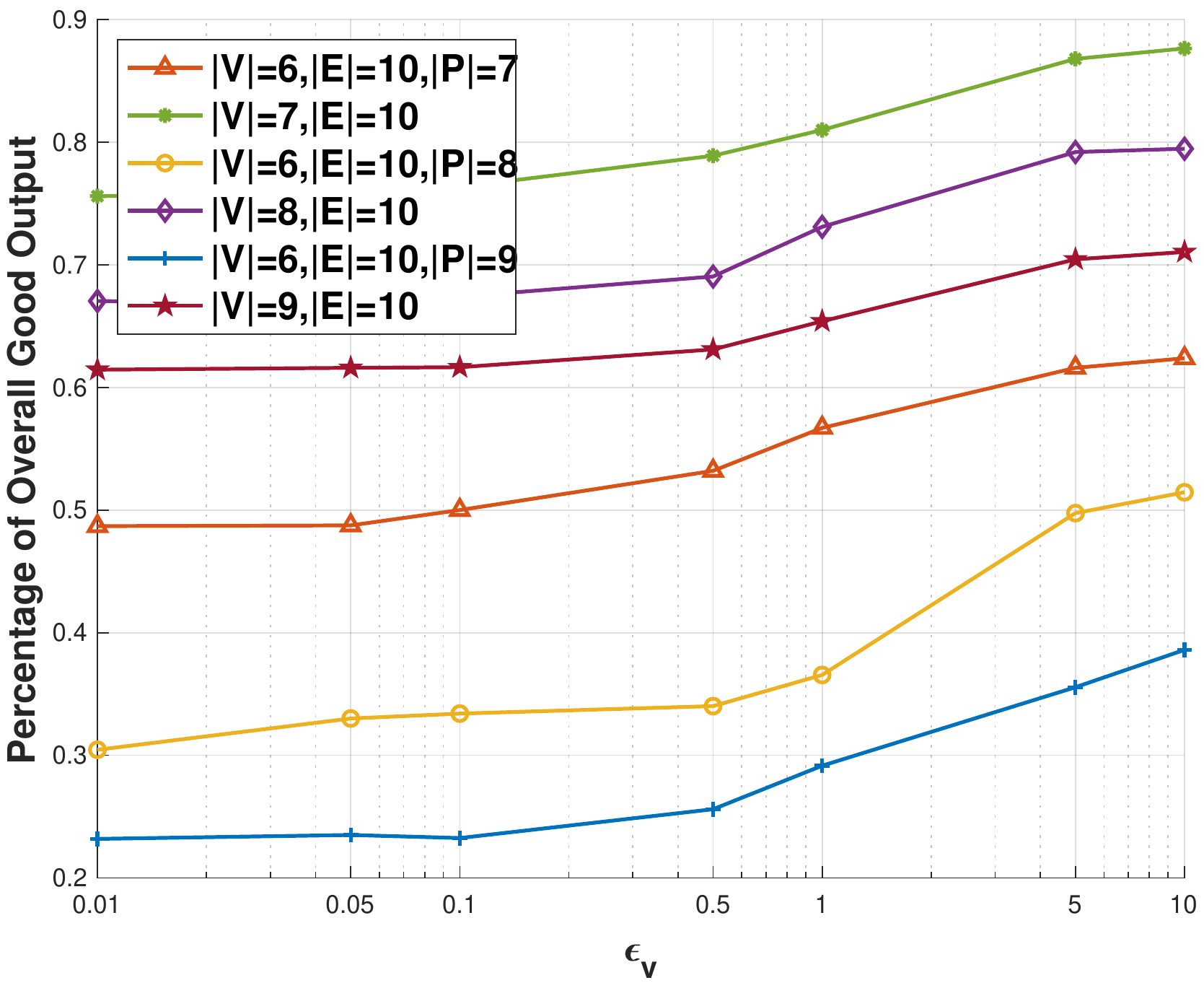}
		\label{subfig:6:ring_dp_ve5}
	}
	\caption{Output Quality with DP on Cyclic Paths.}
	\label{fig:6:ring_dp_v}
\end{figure*}
When we have a cyclic path in a given map, to preserve privacy about the path (the existent of a ring and the existent of an arbitrary edge), DP on vertices and edges are always needed. Since we already studied the performance of DP on vertices in Figure~\ref{subfig:6:no_ring_dp_v}, we inject DP on both vertices and edges in Figure~\ref{fig:6:ring_dp_v}. Specifically, according to Figure~\ref{subfig:6:no_ring_dp_e}, we tuning $\epsilon_{v}$ with three fixed $\epsilon_{e}$s: $\epsilon_{e} = 0.5$ in Figure~\ref{subfig:6:ring_dp_ve0}, $\epsilon_{e} = 1$ in Figure~\ref{subfig:6:ring_dp_ve1}, and $\epsilon_{e} = 5$ in Figure~\ref{subfig:6:ring_dp_ve5}.

In Figure~\ref{fig:6:ring_dp_v}, we select the map with 6 vertices and 10 edges (because it achieves a local minimum for the percentage of good output) as a base map for cyclic path publishing. We compare the performance between a cyclic path and its corresponding acyclic path with same number of vertices. With a give $\epsilon_{v}$ and a given $\epsilon_{e}$, our algorithm achieves better performance on acyclic paths over cyclic paths. Because the vertices in a cyclic path have more than one copies, according to Equation~\eqref{eq:6:qv}, our differentially private vertices pre-processing would create more vertices for cyclic paths than acyclic paths. Then based on the result in Figure~\ref{subfig:6:good_output_no_dp}, more vertices damage the performance of the graph-based path, such results in Figure~\ref{fig:6:ring_dp_v} fit our expectation. Additionally, as the same reason behind Figure~\ref{fig:6:no_ring_dp}, when increasing $\epsilon_{v}$ and $\epsilon_{e}$, the percentage of good output is increased in Figure~\ref{fig:6:ring_dp_v}.

\section{Conclusion}
\label{sec:6:conclusion}
In this paper, we propose a differentially private graph-based path publishing algorithm. In our algorithm, we hide the real path in the topology of the original network by embedding it in a graph containing fake edges and vertices applying the differential privacy technique, so that only the trusted path participants who have the full knowledge about the existence of the edges in a network (which contains the target path) can reconstruct the exact vertices and edges of the original path, and determine the visiting order of the vertices on the path with high probability. Our algorithm effectively protects the edge existence, which is unable to achieve with the existing approaches for differentially private path publishing which preserve the existence of vertices. Both theoretical analysis and experimental evaluation show that our algorithm guarantees both utility and privacy against the adversaries who miss partial edges information but have all the vertices information on the path.


\section*{Acknowledgements}
This work is supported by Australian Government Research Training Program Scholarship, Australian Research Council Discovery Project DP150104871, National Key R \& D Program of China Project \#2017YFB0203201, and supported with supercomputing resources provided by the Phoenix HPC service at the University of Adelaide. The corresponding author is Hong Shen.

\bibliographystyle{plain}
\bibliography{zglu.bib}

\begin{thebibliography}{10}

\bibitem{AlhussaeniK2018}
Khalil Al-Hussaeni, Benjamin~CM Fung, Farkhund Iqbal, Gaby~G Dagher, and Eun~G
  Park.
\newblock Safepath: Differentially-private publishing of passenger trajectories
  in transportation systems.
\newblock {\em Computer Networks}, 143:126--139, 2018.

\bibitem{AlahiA2016}
Alexandre Alahi, Kratarth Goel, Vignesh Ramanathan, Alexandre Robicquet,
  Li~Fei-Fei, and Silvio Savarese.
\newblock Social lstm: Human trajectory prediction in crowded spaces.
\newblock In {\em Proceedings of the IEEE conference on computer vision and
  pattern recognition}, pages 961--971, 2016.

\bibitem{CaoY2019}
Yang Cao, Yonghui Xiao, Li~Xiong, and Liquan Bai.
\newblock Priste: from location privacy to spatiotemporal event privacy.
\newblock pages 1606--1609, 2019.

\bibitem{CaoY2018}
Yang Cao, Li~Xiong, Masatoshi Yoshikawa, Yonghui Xiao, and Si~Zhang.
\newblock Contpl: controlling temporal privacy leakage in differentially
  private continuous data release.
\newblock {\em Proceedings of the VLDB Endowment}, 11(12):2090--2093, 2018.

\bibitem{ChenR2012}
Rui Chen, Benjamin Fung, Bipin~C Desai, and N{\'e}riah~M Sossou.
\newblock Differentially private transit data publication: a case study on the
  montreal transportation system.
\newblock In {\em Proceedings of the 18th ACM SIGKDD international conference
  on Knowledge discovery and data mining}, pages 213--221. ACM, 2012.

\bibitem{ChenS2016}
Shu Chen and Hong Shen.
\newblock Semantic-aware dummy selection for location privacy preservation.
\newblock In {\em 2016 IEEE Trustcom/BigDataSE/ISPA}, pages 752--759. IEEE,
  2016.

\bibitem{DWORK2006B}
Cynthia Dwork.
\newblock Differential privacy.
\newblock In {\em Automata, languages and programming}, pages 1--12. Springer,
  2006.

\bibitem{DWORK2006A}
Cynthia Dwork, Frank McSherry, Kobbi Nissim, and Adam Smith.
\newblock Calibrating noise to sensitivity in private data analysis.
\newblock In {\em Theory of cryptography}, pages 265--284. Springer, 2006.

\bibitem{FerreiraN2013}
Nivan Ferreira, James~T Klosowski, Carlos~E Scheidegger, and Cl{\'a}udio~T
  Silva.
\newblock Vector field k-means: Clustering trajectories by fitting multiple
  vector fields.
\newblock In {\em Computer Graphics Forum}, volume~32, pages 201--210. Wiley
  Online Library, 2013.

\bibitem{FouldsL2012}
Leslie~R Foulds.
\newblock {\em Graph theory applications}.
\newblock Springer Science \& Business Media, 2012.

\bibitem{GuoL2016}
Longkun Guo and Hong Shen.
\newblock Privacy-preserving internet traffic publication.
\newblock In {\em 2016 IEEE Trustcom/BigDataSE/ISPA}, pages 884--891. IEEE,
  2016.

\bibitem{GUPTA2010}
Anupam Gupta, Katrina Ligett, Frank McSherry, Aaron Roth, and Kunal Talwar.
\newblock Differentially private combinatorial optimization.
\newblock In {\em Proceedings of the 21st Annual ACM-SIAM Symposium on Discrete
  Algorithms}, pages 1106--1125, 2010.

\bibitem{GursoyM2018}
Mehmet~Emre Gursoy, Ling Liu, Stacey Truex, and Lei Yu.
\newblock Differentially private and utility preserving publication of
  trajectory data.
\newblock {\em IEEE Transactions on Mobile Computing}, 2018.

\bibitem{HeX2015}
Xi~He, Graham Cormode, Ashwin Machanavajjhala, Cecilia~M Procopiuc, and Divesh
  Srivastava.
\newblock Dpt: differentially private trajectory synthesis using hierarchical
  reference systems.
\newblock {\em Proceedings of the VLDB Endowment}, 8(11):1154--1165, 2015.

\bibitem{HeX2016}
Xi~He, Nisarg Raval, and Ashwin Machanavajjhala.
\newblock A demonstration of visdpt: visual exploration of differentially
  private trajectories.
\newblock {\em Proceedings of the VLDB Endowment}, 9(13):1489--1492, 2016.

\bibitem{HuaJ2015}
Jingyu Hua, Yue Gao, and Sheng Zhong.
\newblock Differentially private publication of general time-serial trajectory
  data.
\newblock In {\em 2015 IEEE Conference on Computer Communications (INFOCOM)},
  pages 549--557. IEEE, 2015.

\bibitem{JiangK2013}
Kaifeng Jiang, Dongxu Shao, St{\'e}phane Bressan, Thomas Kister, and Kian-Lee
  Tan.
\newblock Publishing trajectories with differential privacy guarantees.
\newblock In {\em Proceedings of the 25th International Conference on
  Scientific and Statistical Database Management}, page~12. ACM, 2013.

\bibitem{LiC2017}
Chao Li, Balaji Palanisamy, and James Joshi.
\newblock Differentially private trajectory analysis for points-of-interest
  recommendation.
\newblock In {\em 2017 IEEE International Congress on Big Data (BigData
  Congress)}, pages 49--56. IEEE, 2017.

\bibitem{LuZ2017}
Zhigang Lu and Hong Shen.
\newblock Secured privacy preserving data aggregation with semi-honest servers.
\newblock In {\em The 21st Pacific-Asia Conference on Knowledge Discovery and
  Data Mining (PAKDD)}, pages 300--312. Springer, 2017.

\bibitem{LuZ2019}
Zhigang Lu and Hong Shen.
\newblock A convergent differentially private k-means clustering algorithm.
\newblock In {\em The 23rd Pacific-Asia Conference on Knowledge Discovery and
  Data Mining (PAKDD)}, pages 612--624. Springer, 2019.

\bibitem{MaX2018}
Xindi Ma, Jianfeng Ma, Hui Li, Qi~Jiang, and Sheng Gao.
\newblock Agent: an adaptive geo-indistinguishable mechanism for continuous
  location-based service.
\newblock {\em Peer-to-Peer Networking and Applications}, 11(3):473--485, 2018.

\bibitem{MCSHERRY2009}
Frank McSherry.
\newblock Privacy integrated queries.
\newblock In {\em Proceedings of the 2009 ACM SIGMOD International Conference
  on Management of Data}. ACM, 2009.

\bibitem{MCSHERRY2007}
Frank McSherry and Kunal Talwar.
\newblock Mechanism design via differential privacy.
\newblock In {\em Foundations of Computer Science, 2007. 48th Annual IEEE
  Symposium on}, pages 94--103. IEEE, 2007.

\bibitem{MuX2019}
Xuejiao Mu, Hong Shen, and Zhigang Lu.
\newblock Temporal caching-aware dummy location selection for privacy
  protection in location-based services.
\newblock In {\em 2019 20th International Conference on Parallel and
  Distributed Computing, Applications and Technologies (PDCAT)}, page to
  appear. IEEE, 2019.

\bibitem{NieY2016}
Yiwen Nie, Liusheng Huang, Zongfeng Li, Shaowei Wang, Zhenhua Zhao, Wei Yang,
  and Xiaorong Lu.
\newblock Geospatial streams publish with differential privacy.
\newblock In {\em International Conference on Collaborative Computing:
  Networking, Applications and Worksharing}, pages 152--164. Springer, 2016.

\bibitem{WANG2018}
Leye Wang, Gehua Qin, Dingqi Yang, Xiao Han, and Xiaojuan Ma.
\newblock Geographic differential privacy for mobile crowd coverage
  maximization.
\newblock In {\em AAAI}, pages 200--207, 2018.

\bibitem{WangS2016}
Shuo Wang, Richard Sinnott, and Surya Nepal.
\newblock Privacy-protected social media user trajectories calibration.
\newblock In {\em 2016 IEEE 12th International Conference on e-Science
  (e-Science)}, pages 293--302. IEEE, 2016.

\bibitem{WangS2017}
Shuo Wang and Richard~O Sinnott.
\newblock Protecting personal trajectories of social media users through
  differential privacy.
\newblock {\em Computers \& Security}, 67:142--163, 2017.

\bibitem{WIKI2019}
{Wikipedia contributors}.
\newblock Law of large numbers --- {Wikipedia}{,} the free encyclopedia, 2019.
\newblock [Online; accessed 9-July-2019].

\bibitem{XiaoY2015}
Yonghui Xiao and Li~Xiong.
\newblock Protecting locations with differential privacy under temporal
  correlations.
\newblock In {\em Proceedings of the 22nd ACM SIGSAC Conference on Computer and
  Communications Security}, pages 1298--1309. ACM, 2015.

\bibitem{XiaoY2017}
Yonghui Xiao, Li~Xiong, Si~Zhang, and Yang Cao.
\newblock Loclok: Location cloaking with differential privacy via hidden markov
  model.
\newblock {\em Proceedings of the VLDB Endowment}, 10(12):1901--1904, 2017.

\bibitem{YAO1982}
Andrew~C Yao.
\newblock Protocols for secure computations.
\newblock In {\em Foundations of Computer Science, 1982. SFCS'82. 23rd Annual
  Symposium on}, pages 160--164. IEEE, 1982.

\end{thebibliography}

\end{document}